\newtheorem{definition}{Definition}
\newtheorem{theorem}{Theorem}
\newtheorem{lemma}{Lemma}
\newtheorem{corollary}{Corollary}
\Crefname{theorem}{Theorem}{Theorems}
\theoremstyle{remark}
\newtheorem{algorithm}{Algorithm}
\newcommand{\sign}{\text{sign}}
\renewcommand{\tr}{\mathrm{Tr}}
\renewcommand{\var}{\mathrm{Var}}
\newcommand{\nshot}{N_s}
\newcommand{\nes}{N_E}
\newcommand{\emshadow}{H}
\newcommand{\rhoid}{\rho_{id}}
\newcommand{\qmaddress}{\affiliation{Quantum Motion, 9 Sterling Way, London N7 9HJ, United Kingdom}}
\newcommand{\oxddress}{\affiliation{Department of Materials, University of Oxford, Parks Road, Oxford OX1 3PH, United Kingdom}}
\begin{document}

\title{Quantum Error Mitigated Classical Shadows}
\author{Hamza Jnane}
\email[equal author contributions]{}
\oxddress
\qmaddress

\author{Jonathan Steinberg}
\email[equal author contributions]{}
\affiliation{Naturwissenschaftlich–Technische Fakultät, Universität Siegen, 57068 Siegen, Germany}
\affiliation{State Key Laboratory for Mesoscopic Physics, School of Physics and Frontiers Science Center for Nano-Optoelectronics, Peking University, Beijing 100871, China}

\author{Zhenyu Cai}
\oxddress
\qmaddress

\author{H. Chau Nguyen}
\affiliation{Naturwissenschaftlich–Technische Fakultät, Universität Siegen, 57068 Siegen, Germany}

\author{B\'alint Koczor}
\email{balint.koczor@materials.ox.ac.uk}
\oxddress
\qmaddress

\begin{abstract}
Classical shadows enable us to learn many properties of a quantum state $\rho$ with very few measurements.
However, near-term and early fault-tolerant quantum computers will only be able to prepare noisy quantum states $\rho$
and it is thus a considerable challenge to efficiently learn properties of an ideal, noise free state $\rhoid$. 
We consider error mitigation techniques, such as Probabilistic Error Cancellation (PEC),
Zero Noise Extrapolation (ZNE) and Symmetry Verification (SV) which have been developed
for mitigating errors in single expected value measurements
and generalise them for mitigating errors in classical shadows.
We find that PEC is the most natural candidate and thus
develop a thorough theoretical framework for PEC shadows with the following rigorous theoretical guarantees:
PEC shadows are an unbiased estimator for the ideal quantum state $\rhoid$;
the sample complexity for simultaneously predicting many linear properties of $\rhoid$
is identical to that of the conventional shadows approach up to a multiplicative factor
which is the sample overhead due to error mitigation. Due to efficient post-processing of
shadows, this overhead does not depend directly
on the number of qubits but rather grows exponentially with the number of noisy gates.
The broad set of tools introduced in this work may be instrumental in exploiting
near-term and early fault-tolerant quantum computers: We demonstrate in detailed numerical simulations 
a range of practical applications of quantum computers that will significantly benefit from our techniques.
\end{abstract}

\maketitle
	
\section{Introduction}

Quantum computers are developing rapidly and can already be said to perform
certain demonstration tasks that are impossible or very difficult with even the largest
supercomputers~\cite{aruteQuantumSupremacyUsing2019,zhongPhaseProgrammableGaussianBoson2021,wuStrongQuantumComputational2021,ebadiQuantumPhasesMatter2021,gongQuantumWalksProgrammable2021a}. It is however still to be seen whether the technology
can achieve true practical quantum advantage, i.e., the point when these
machines
can solve an otherwise impossible computational task that is of value to industry
or to researchers in other fields such as quantum field theory~\cite{latticeschwinger},
quantum gravity~\cite{jafferis2022traversable}
or drug development and materials science~\cite{cao2019quantum, mcardle2020quantum,bauer2020quantum, Motta2022}.

Quantum computers are highly vulnerable to noise and while quantum error correction provides
a comprehensive solution, implementing it poses an extreme engineering challenge~\cite{nisq_preskill_2018}. 
It is generally expected that some form of early practical quantum advantage just beyond
the reach of classical computing could be achieved even with noisy quantum computers~\cite{cai2022quantum,kandala2019error,kim2023scalable, o2022purification}.
This prospect has motivated the development of a broad range of quantum error mitigation protocols
which has grown into an entire subfield. While the range of error mitigation tricks are very diverse, 
they collectively aim to mitigate the effect of gate errors in an expected-value measurement process
-- a key subroutine in quantum computing.

\begin{figure*}
	\centering
        \includegraphics[width=\textwidth]{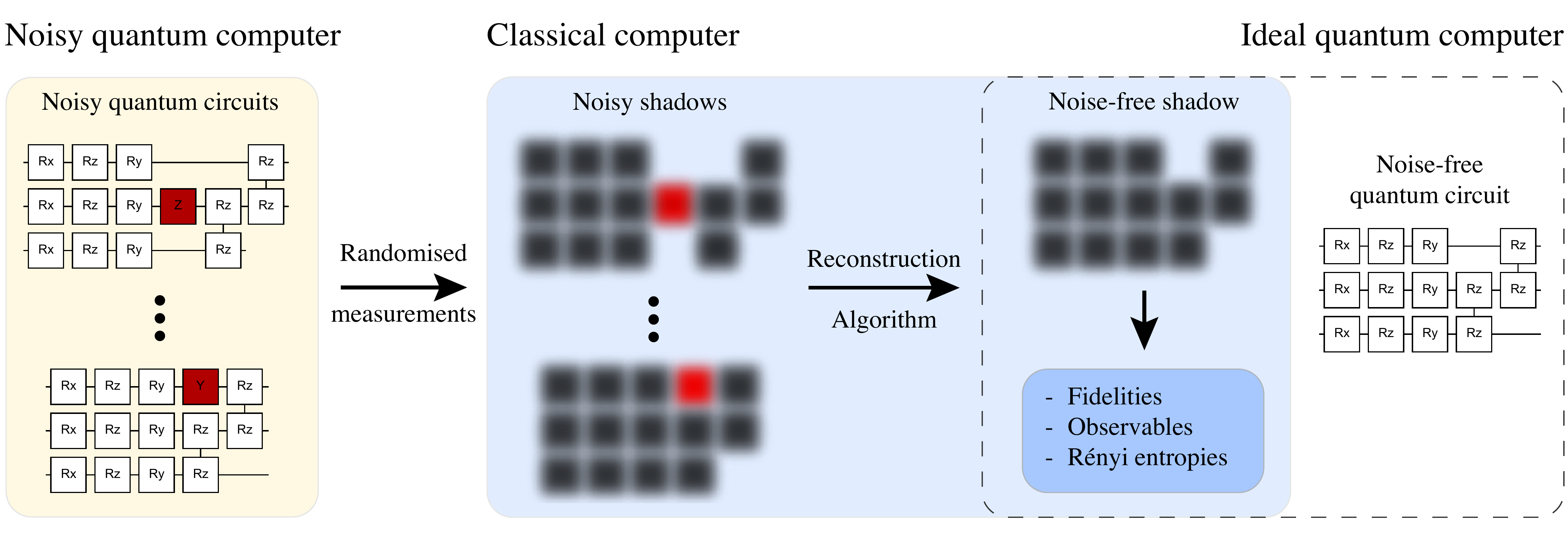}
	\caption{
        In the present work we assume we only have access to a noisy quantum computer (left)
        such that every circuit we run (left, yellow area) gets corrupted by gate noise (unwanted red gate elements).
        We aim to extract properties of a state that would be prepared by an ideal quantum computer (right)
        with the use of powerful error mitigation techniques.
        We provide a rigorous theoretical framework for PEC shadows
        which effectively allows us to obtain a classical shadow of
        the ideal quantum state (noise-free shadow) from which we can predict many ideal
        properties in classical post-processing (middle blue area, classical computer).
        In our formalism we run a series of distinct quantum circuit variants (left, yellow area)
        that cast different classical shadows (noisy shadows) due to gate noise and due to our
        intentional recovery operations (red gate elements and their shadows). 
        Under the assumption that the device's error characteristics have been appropriately learned,
        we can estimate the noise free shadow (middle) via classical post-processing.
        }
	\label{fig:front_fig}
\end{figure*}

Another major challenge is that near-term quantum algorithms typically require an extreme number
of circuit repetitions in order to suppress shot noise~\cite{cerezoVariationalQuantumAlgorithms2021a, endoHybridQuantumClassicalAlgorithms2021, bharti2021noisy,van2021measurement}.
Classical shadows were introduced relatively recently~\cite{shadow_huang_2020}
and represent another promising angle in achieving practical quantum advantage.
The approach allows one to extract many properties of a quantum state without
having to repeat the measurement many times. This is achieved by performing measurements
in randomized bases. The measurement outcomes as bitstrings, along with the indexes of the measurement bases
form a classical shadow which is an efficient classical representation of the entire quantum state.
Shadows have become an entire subfield and various promising applications have been proposed~\cite{covar, shadow_spec_2023} that
greatly benefit from the rich information one can access via shadows. For instance, in 
shadow spectroscopy~\cite{shadow_spec_2023}, we estimate many time-dependent expected values from time-evolved quantum states,
which then allows us to reveal accurate spectra through the use of efficient classical post-processing.

The focus of the present work is to amalgamate quantum error mitigation techniques with 
classical shadows. It is worth noting that prior works have considered fruitful connections between
quantum error mitigation and classical shadows. First, ref.~\cite{seif_distillation_2023, zhou_2023}
use classical shadows obtained from a noisy quantum state to
perform purification-based error mitigation~\cite{PhysRevX.11.031057,huggins2020virtual,PhysRevApplied.18.044064} offline, 
with only access to a single copy of the state but at an exponential complexity in the number of qubits.
Second, the mitigation of errors in the randomised measurements have similarly been
addressed in \cite{chen_2021, koh_classical_2022}.

In the present work we address a distinct problem:
Previous methods have assumed that the task involves extracting information
from a predetermined quantum state $\rho$, such as the output of a quantum device.
However, we consider the practically more relevant scenario where the state $\rho$
is generated by a noisy quantum circuit,
and our aim is to mitigate the impact of errors induced by the noisy quantum gates.
Our focus is thus to extract properties of an ideal state $\rhoid$,
which would be generated by a noise-free quantum computer.
This approach is a generalisation of quantum error mitigation techniques
which generally aim to extract an ideal expected value $\tr[ O \rhoid ]$
when only noisy measurements $\tr[ O \rho ]$ are available.
In contrast, the techniques we present are not restricted to a single
expected value but instead provide efficient classical representations of the
ideal quantum state $\rhoid$ through powerful classical shadows as illustrated in \cref{fig:front_fig}.

While we cover most classes of conventional error mitigation techniques, such as
Probabilistic Error Cancellation (PEC), Zero Noise Extrapolation (ZNE) and
Symmetry Verification (SV),
we find that PEC is the most amiable to be used in combination with classical shadows.
We thus dedicate most attention to PEC shadows for which we establish a comprehensive theory:
Assuming the error model of the quantum gates has been appropriately learned, we rigorously prove
that our PEC shadow is an unbiased estimator of the ideal quantum state $\rhoid$.
We also furnish explicit, efficient classical reconstruction
algorithms that enable the simultaneous prediction of linear and non-linear properties of $\rhoid$.

Similarly to conventional error mitigation techniques, the ability of estimating properties of the noise-free scenario
comes at the cost of an  increased statistical variance which implies an increased number of circuit repetitions.
We establish rigorous bounds on sample complexities and our results indicate that:
(a) the sample complexity of PEC shadows is identical to that of conventional shadows up
to a multiplicative factor, and
(b) this multiplicative factor $\lVert g \rVert_1$ has already been known in the literature as the sample overhead of the
PEC approach~\cite{cai2022quantum}.
Thus the present techniques are efficient in the sense that sample complexities are independent
of the number of qubits -- but of course the overhead grows exponentially with the number of noisy gates.

In numerical simulations we showcase
a broad range of useful practical applications that will play a crucial role in both 
the near-term and in the early fault-tolerance era. These examples comprise:
(a) determining error mitigated energies in variational quantum circuits, which constitutes a fundamental subroutine in near-term applications;
(b) predicting many properties simultaneously in ground state preparation to extract two-point
correlators or to accelerate the training of circuit parameters\cite{shadow_huang_2020, covar, shadow_spec_2023};
(c) extracting error mitigated local entanglement entropies of a ground state that is prepared by a noisy quantum circuit.
Moreover, we discuss several other applications that will significantly benefit from our efficient
amalgams of quantum error mitigation and classical shadows.

This paper is organised as follows. In the following section we first briefly review the formalism
of classical shadows. Then in \cref{sec:pec} we introduce our main result as Probabilistic Error Cancelled
shadows. In \cref{sec:other_mitigation} we discuss how to combine further error mitigation
techniques with classical shadows but conclude that PEC shadows admit the most natural formalism.
Finally, in \cref{sec:app} we demonstrate powerful applications of our approach and then conclude in \cref{sec:concl}.


\section{Preliminaries: classical shadows \label{sec:widealised_measurements}}

The original idea of classical shadow tomography is to apply to the quantum system of $N$ qubits prepared in a specific state $\rho$  a unitary 
$Q_j$ randomly sampled from a certain ensemble $\mathcal{Q}$; typically the ensemble corresponds to just 
rotating the individual qubits with single-qubit unitaries (Pauli basis measurements) or applying Clifford rotations. 
This is followed by a measurement in the computational basis, yielding a bitstring $b \in \{0,1\}^N$ as the outcome; 
This bitstring is logged
along with the measurement basis forming the index $l=(j,b)$.
The collection of these indexes from many independent runs of the protocol then allow us to construct a classical shadow of the state.
A classical shadow provides a description of the quantum
state that can be classically efficiently stored and manipulated, bypassing the computationally-expensive reconstruction of the full density matrix~\cite{shadow_huang_2020}.

\subsection{Classical shadows via idealised measurements}
We mathematically describe a particular measurement outcome $l=(j,b)$ by the positive operator as
$E_{l} {=}  p_{j} Q_j^{\dagger} \vert b \rangle \langle b \vert  Q_j$;
The probability $q_l = \tr (\rho E_l)$  of this outcome is a product of a (classical)
probability $p_j$ of choosing a unitary $Q_j$ and the probability of observing the bitstring $b$
given the rotated measurement basis.
The shadow protocol can be therefore compactly described by a set $E$ of $N_E = 2^N \vert \mathcal{Q} \vert$ positive operators given by
\begin{equation}
\label{eq:idealised_measurements}
E = \{ E_{l} {=}  p_{j} Q_j^{\dagger} \vert b \rangle \langle b \vert  Q_j,
	\, 
	\text{with}
	\,
	Q_j \in \mathcal{Q},
	\,
	b \in \{0,1\}^N
	\}.
\end{equation}
In the literature, such a collection $E$ of positive operators $E_l$ that
sums up to the identity is referred to as a generalised measurement
(positive operator-valued measure - POVM) and $E_l$ are called
its effects~\cite{nielsen_chuang_2010}.
It has been shown that formulating shadow tomography using POVMs brings various advantages~\cite{povm_shadow_nguyen_2022}.
Particularly relevant to our purpose, this formulation allows one to automatically account for errors in measurements, which include both read-out errors and gate errors in the implementation of the random unitaries $Q_j$~\cite{nisq_preskill_2018,detector_tomoIBM_2019,arute_quantum_2019, koh_classical_2022, chen_2021}. 
This is carried out by simply adjusting the effects $E_l$ appropriately~\cite{povm_shadow_nguyen_2022,glos2022adaptive}
as we detail towards the end of this section.

Given the above generalised measurement, a single outcome $l=(j,b)$ can be used to construct a snapshot as
$\hat{\rho}_{l} = C_{E}^{-1} (E_{l})$
where the channel $C_E$ is defined by 
\begin{equation} \label{eq:measurement_channel}
C_{E}(\rho) = \sum_{l=1}^{\nes} \tr[\rho E_{l}] E_{l},
\end{equation}
which is invertible if $E$ spans the whole space of observables~\cite{shadow_huang_2020,povm_shadow_nguyen_2022}.
The snapshot can be thought of as single-shot-estimator of the prepared state $\rho$.
In an experiment one repeats the above single-shot procedure $\nshot$ times, which produces a collection of
outcomes $\{l_1,l_2,\ldots,l_{\nshot}\}$. 
Accordingly, a collection
of snapshots can be constructed 
\begin{align*}
	S(\rho, \nshot) = \{\hat{\rho}_{l_1}, \hat{\rho}_{l_2}, \dots \hat{\rho}_{l_{\nshot}} \},
\end{align*}
which is called a \emph{classical shadow} of $\rho$.
The classical shadow allows us to obtain an unbiased estimate for the density operator in the sense $\rho  = \mathbb{E}_{l}[\hat{\rho}_l]$.

Crucial to the advantage of shadow tomography is that when the measurement $E$ consists of independent measurements on individual qubits, the snapshots $\hat{\rho}_l$ also factorise into a tensor product over the qubits. It is therefore sufficient to store single-qubit tensoring factors of $\hat{\rho}_l$, instead of the exponentially large matrix itself~\cite{shadow_huang_2020}. Functions of the density operator with appropriate locality, such as correlation functions or the Rényi entropy, can also be efficiently estimated~\cite{shadow_huang_2020}.
As an example, for experimentally-friendly case of randomised noiseless Pauli basis measurements on the qubits, the snapshot corresponding to $l=(j,b)$ is given explicitly by 
\begin{align} \label{eq:snapshot}
	\hat{\rho}_{l} =   \bigotimes_{i=1}^N \Big[ 3 (Q^{(i)}_{j})^{\dagger} \vert b^{(i)}\rangle \langle b^{(i)} \vert  Q^{(i)}_{j} - \openone \Big].
\end{align}
Above $b^{(i)}$ is the $i^{th}$ bit of the $N$-qubit measurement outcome bitsring $b$, 
and $Q^{(i)}_{j}$ is the $i^{th}$ single-qubit basis transformation
in the applied $N$-qubit Pauli basis transformation $Q_j$.
In the following, we focus on this practically-pivotal randomised Pauli-measurement scheme. 
However, our general formalism can immediately be applied to
other unitary ensembles such as matchgates~\cite{wan_matchgate_2022},
Clifford circuits~\cite{shadow_huang_2020} and beyond~\cite{bertoni2022shallow}.

\subsection{Mitigating readout errors\label{sec:readout}}

An advantage of having introduced classical shadows through generalised measurements (POVMs)
is that it is now straightforward to incorporate readout-error mitigation techniques~\cite{povm_shadow_nguyen_2022}. 
In particular, readout errors refer to the classical process of incorrectly assigning the labels $b$ to the measurement outcome.
While in ion-trap devices readout errors may  not be significant, i.e., below error levels of gate
operations~\cite{PRXQuantum.2.020343,moses2023race}, in solid-state devices these errors can be quite substantial
and can be on the order of several percent~\cite{kim2023evidence}.
We illustrate the approach by considering a simple readout-error model where entries in the bitstring $b$ undergo random
and independent bit flips ($i^{th}$ bit is flipped as $0 \rightarrow 1$ with probability $\alpha_i^{+}$ and as $1 \rightarrow 0$ with probability $\alpha_i^{-}$)
-- while existing techniques for addressing correlated readout errors as well as gate
errors in the shadow basis rotations
are indeed similarly applicable~\cite{maciejewski2020mitigation,detector_tomoIBM_2019,cai2022quantum,povm_shadow_nguyen_2022}.
As a consequence, the idealised measurement operators $\ketbra{0}{0}$ and $\ketbra{1}{1}$ on the $i^{th}$ qubit
are replaced according to the readout error model, in the present case by $(1-\alpha_i^{+}) \ketbra{0}{0}+ \alpha_i^{-} \ketbra{1}{1}$
and by $\alpha_i^{+} \ketbra{0}{0}+(1-\alpha_i^{-}) \ketbra{1}{1}$,
which then allows us to explicitly build the effects in \cref{eq:idealised_measurements} and invert the measurement channels in \cref{eq:measurement_channel}.

For example, we can analytically obtain a simple formula for the Pauli measurement snapshot from~\cref{eq:snapshot} 
for the specific readout-error model when $\alpha_i^{+}=\alpha_i^{-}=:\alpha_i$ as
\begin{equation*}
\hat{\rho}_{l} = \bigotimes_{i=1}^N \Big[ \frac{3}{1-2\alpha_i} (Q^{(i)}_{j})^{\dagger} \vert b^{(i)}\rangle \langle b^{(i)} \vert  Q^{(i)}_{j} - \frac{1+\alpha_i}{(1-  2\alpha_i)} \openone \Big].
\end{equation*}
In the more realistic case when $\alpha_i^{+} \ne \alpha_i^{-}$ the snapshots can still be computed
straightforwardly by numerically (rather than analytically) inverting the single-qubit channel in~\cref{eq:measurement_channel},
model-free readout-error mitigation is also applicable~\cite{model_free_readout_error_mit}.
In conclusion, as measurement-error mitigation techniques are completely decoupled from mitigating state-preparation errors,
our mathematical theorems will quantify the sample complexity of the latter,
while we demonstrate in numerical simulations that it is indeed straightforward
to combine measurement-error mitigation techniques with the present approach.


\section{Probabilistic error cancelled shadows\label{sec:pec}}

While PEC has been used in the literature to remove the bias in expected-value measurements~\cite{cai2022quantum},
here we apply it to classical shadows to obtain an efficient, classical representation of
the entire ideal, noise free state $\rho_{id}$.
While this procedure even allows us to estimate the full density matrix $\rho_{id}$,
we will focus on efficient practical applications such as \emph{simultaneously predicting many properties} of $\rho_{id}$.
At a technical level PEC shadows is a combination of two random processes, i.e.,
sampling circuit variants $\mathcal{G}_{\underline{k}}$ and
sampling the bitstrings and the basis transformations that form a shadow.

\subsection{Probabilistic error cancellation}
PEC is one of the most broadly studied error mitigation techniques~\cite{gambetta_error_mitig,practical_QEM,cai2022quantum}
and indeed has been implemented experimentally~\cite{cai2022quantum,kandala2019error,kim2023scalable, o2022purification};
It is performed by decomposing the channel $\mathcal{U}$ of an ideal unitary gate 
into a linear combination of noisy physical gate operations $\mathcal{G}_k$ as
$	\mathcal{U}  = \sum_{k} \gamma_k \mathcal{G}_k$.
Negative quasiprobabilities $\gamma_k < 0$  are required to formally implement
the inverse of a noise channel. Thus the above operation is nonphysical, similarly
as the inverse measurement channels of the shadow protocols are nonphysical operations.
For this reason, PEC only applies the decomposition in classical post-processing,
at the level of expected values
and allows us to compute ideal expected values as a linear combination
of noisy ones as
$\sum_{k} \gamma_k \tr [ O \mathcal{G}_ k |0\rangle\langle0| ]$.

Let us first give an overview of efficient methods that
have been developed to accurately identify the coefficients $\{\gamma_k\}$ for the quasi-probability decomposition~\cite{strikis2021learning, PhysRevResearch.3.033098, berg2022probabilistic, montanaro2021error, cai2022quantum}.
The simplest such approach exploits that noise models are approximately local and one can 
thus efficiently characterise the local noise channel of each gate and invert them classically.
Recent advances allow for non-local noise models of the form $\Lambda^G = e^{\mathcal{L}}$
to be efficiently learned for the case of sparse Pauli operations $\mathcal{L}$
resulting in a trivial inverse of the channel as $\Lambda^{G-1}$~\cite{berg2022probabilistic}.
Making the assumption that $\mathcal{U}$ is supported only on the space spanned by the noisy operations $\mathcal{G}_k$,
one then randomly applies circuit variants that implement the operations $\mathcal{G}_k$ in the inverse noise channel.
Under this assumption, theoretical guarantees have been derived on the sample complexity of learning
the error model~\cite{berg2022probabilistic}. Given the noise model can be learned prior to the experiment
with a constant overhead, we assume in this work that a sufficiently high-precision estimate
is available~\footnote{For example, one can consider a per-gate error rate $p_{err} = 10^{-2}$ and assume the gate's
error model is characterised to a precision $\epsilon = 10^{-4}$.
To avoid exponential blow-up of the PEC sampling overhead, the number of applications of this gate is limited to
$N_{gate} \leq p_{err}^{-1}$
and over the course of $N_{gate}$ applications the error propagation due to the imperfect error model 
can be approximated as $(1 -  \epsilon)^{N_{gate}} =  1 - \epsilon/ p_{err} + O (\epsilon ^2)$
confirming dependence on the ratio of the two error rates
--
which are typically chosen to be several (presently 2) orders of magnitude different.
}

Indeed a considerable experimental challenge is posed by the possible drift of the error models over time
which may necessitate the repetition of the learning procedure every so often further increasing its
sample budget.
Let us finally note that our scope goes beyond mitigation
of physical gate errors and the formalism developed here can be immediately
 applied to other scenarios, such as the following. a) Overcoming finite rotation-angle resolution
whereby the quasiprobability decomposition is known exactly~\cite{koczor2023probabilistic}
b) Mitigating logical errors in early-fault tolerant devices
whereby the dominant source of noise may arise from imperfect
magic-state distillation but noise characterisation and mitigation 
are effectively identical to the presently considered
formalism~\cite{PhysRevLett.127.200506,PhysRevLett.127.200505,PRXQuantum.3.010345}.

We consider that an ideal quantum state $\rho_{id} := \mathcal{U}_{circ} |0 \rangle \langle 0 |$ is prepared by an ideal circuit
of $\nu$ gates whose channel we denote as
$\mathcal{U}_{circ} = \mathcal{U}_\nu \cdots \circ\mathcal{U}_2 \circ \mathcal{U}_1$.
By introducing the vector notation $\underline{k} = (k_1, k_2, \dots k_\nu)$,
we can compactly represent the decomposition 
of this circuit  into noisy gate sequences as
$\mathcal{U}_{circ}  = \sum_{\underline{k}} g_{\underline{k}} \mathcal{G}_{\underline{k}}$.
Here the index $\underline{k}$ indexes all possible gate sequences
as
\begin{equation} \label{eq:product}
	\mathcal{G}_{\underline{k}} = \mathcal{G}^{(1)}_{k_1} \mathcal{G}^{(2)}_{k_2}  \cdots \mathcal{G}^{(\nu)}_{k_\nu},
	\quad 
	\quad
	g_{\underline{k}} = \gamma^{(1)}_{k_1} \gamma^{(2)}_{k_2}  \cdots \gamma^{(\nu)}_{k_\nu},
\end{equation}
and as shown above the corresponding quasiprobabilities $g_{\underline{k}}$ factorise
(the superscript indexes individual gates, e.g., $\mathcal{G}^{(1)}_{k_1}$ stands for the decompositon of $\mathcal{U}_1$).
We now define the quasiprobability decomposition of a quantum circuit.
\begin{definition}\label{def:decompose}
	We define the quasiprobability decomposition of an
	ideal circuit $\mathcal{U}_{circ}$  via the set $G := \lbrace (g_{\underline{k}},\mathcal{G}_{\underline{k}}) \rbrace_{\underline{k}}$.
	We also define the associated probability distribution $p(\underline{k}) := |g_{\underline{k}}| / \lVert g\lVert_1$
	and here the norm factorizes as $\lVert g\lVert_1 =  \prod_{k=1}^{\nu} \lVert \gamma^{(k)} \rVert_1$
	into a product of individual norms.
\end{definition}
The above quasiprobability decomposition has been used for estimating the ideal expected value of an observable $O$,
 $\tr[ O \mathcal{U}_{circ} |0 \rangle \langle 0 | ]$,
by randomly sampling the noisy expected values $\sign(g_{\underline{k}}) \tr[ O \mathcal{G}_{\underline{k}} |0 \rangle \langle 0 | ]$
according to the probability distribution $p(\underline{k})$
and linearly combining them in a classical post-processing step~\cite{gambetta_error_mitig,practical_QEM,cai2022quantum}.
The norm $\lVert \gamma^{(k)} \rVert_1$ can be evaluated straightforwardly for any probabilistic
	error model:
	Assuming that during the execution of the $k^{th}$ gate an error happens with probability
	$p_k$, e.g., Pauli errors, we obtain the single-gate norm $ \lVert \gamma^{(k)} \rVert_1 = (1+p_k)/(1-p_k)$~\cite{cai2022quantum}.
	Thus, the cost of error mitigation---as the product of these individual norms from \cref{def:decompose}---grows  as
	$\lVert g\lVert_1 \in O(e^{2\xi})$ with the expected number $\xi = \sum_k p_k$ of errors in the full circuit,
	rendering the approach impractical when $\xi \gg 1$~\cite{gambetta_error_mitig,practical_QEM,cai2022quantum}.

\subsection{Details of the protocol}
We start by applying the PEC protocol in a more general setting such that
the quasiprobability decomposition allows us to obtain an unbiased estimator
of the full density matrix.
\begin{lemma}\label{lemma_PEC_unbiased}
	Given a quasiprobability decomposition $G$ from \cref{def:decompose},
	by sampling the noisy circuits $\mathcal{G}_{\underline{k}}$ according to the probability distribution
	$p(\underline{k})$ we obtain an unbiased estimator
        of  the ideal density matrix $\rho_{id} := \mathcal{U}_{circ} |0 \rangle \langle 0 |$ as
	\begin{equation}\label{eq_estimator_rhoid}
		\hat{\rho}_{id} =  \lVert g\lVert_1 \, \mathrm{sign} (g_{\underline{k}}) \mathcal{G}_{\underline{k}} \vert 0 \rangle \langle 0 \vert
	\end{equation} 
	in the sense that $\mathbb{E}_{\underline{k}} [ \hat{\rho}_{id} ]  = \rho_{id}$.
\end{lemma}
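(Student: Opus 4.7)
The plan is to unfold the definition of the expectation over the sampling distribution $p(\underline{k}) = |g_{\underline{k}}|/\lVert g \rVert_1$ and recognise that the construction is engineered precisely so that the sign factor, the normalisation, and the probability weights telescope into the quasiprobability decomposition of $\mathcal{U}_{circ}$.

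Concretely, first I would write
\begin{equation*}
\mathbb{E}_{\underline{k}}[\hat{\rho}_{id}] = \sum_{\underline{k}} p(\underline{k}) \, \lVert g \rVert_1 \, \mathrm{sign}(g_{\underline{k}}) \, \mathcal{G}_{\underline{k}} \vert 0 \rangle \langle 0 \vert.
\end{equation*}
Then I would substitute $p(\underline{k}) = |g_{\underline{k}}|/\lVert g \rVert_1$, cancel the $\lVert g \rVert_1$ factors, and use the elementary identity $|g_{\underline{k}}| \, \mathrm{sign}(g_{\underline{k}}) = g_{\underline{k}}$ to obtain
\begin{equation*}
\mathbb{E}_{\underline{k}}[\hat{\rho}_{id}] = \sum_{\underline{k}} g_{\underline{k}} \, \mathcal{G}_{\underline{k}} \vert 0 \rangle \langle 0 \vert = \Big( \sum_{\underline{k}} g_{\underline{k}} \, \mathcal{G}_{\underline{k}} \Big) \vert 0 \rangle \langle 0 \vert.
\end{equation*}
Finally, I would invoke the defining property of the quasiprobability decomposition, namely $\mathcal{U}_{circ} = \sum_{\underline{k}} g_{\underline{k}} \mathcal{G}_{\underline{k}}$ (which follows from tensoring the per-gate decompositions $\mathcal{U}_j = \sum_{k_j} \gamma^{(j)}_{k_j} \mathcal{G}^{(j)}_{k_j}$ as in~\cref{eq:product}), to conclude $\mathbb{E}_{\underline{k}}[\hat{\rho}_{id}] = \mathcal{U}_{circ} \vert 0 \rangle \langle 0 \vert = \rho_{id}$.

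There is no real obstacle here; the statement is essentially a sanity check that the standard PEC estimator, when applied at the level of the state rather than a scalar expected value $\mathrm{Tr}[O\rho_{id}]$, still yields an unbiased estimator. The only subtlety worth flagging is linearity: one needs that the post-processing combination of channel outputs $\mathcal{G}_{\underline{k}} \vert 0 \rangle \langle 0 \vert$ is taken in the vector-space sense (the resulting operator is generally not a density matrix, since individual signed contributions can be non-positive), but the expectation is nonetheless a well-defined operator equal to $\rho_{id}$. The factorisation of $\lVert g \rVert_1$ and of $p(\underline{k})$ across gates, noted in \cref{def:decompose}, is not needed for unbiasedness itself but will become important later when bounding variance and sample complexity.
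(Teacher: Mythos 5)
Your proposal is correct and is essentially identical to the paper's proof: both expand $\mathbb{E}_{\underline{k}}[\hat{\rho}_{id}]$ as a sum weighted by $p(\underline{k}) = |g_{\underline{k}}|/\lVert g \rVert_1$, collapse $p(\underline{k})\,\lVert g \rVert_1\,\mathrm{sign}(g_{\underline{k}})$ to $g_{\underline{k}}$, and invoke $\sum_{\underline{k}} g_{\underline{k}} \mathcal{G}_{\underline{k}} = \mathcal{U}_{circ}$ from \cref{def:decompose}. Your added remark about linearity and the non-physicality of individual signed contributions matches the discussion the paper gives immediately after the lemma.
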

The above estimator has a clear operational meaning:
(1) choose a multi-index $\underline{k}$ randomly according to the probability  distribution $p(\underline{k})$
and run the noisy quantum circuit $\mathcal{G}_{\underline{k}}$;
(2) the output state $\mathcal{G}_{\underline{k}} |0 \rangle \langle 0 |$ is a density matrix
that we multiply by $\mathrm{sign}(g_{\underline{k}})$ and with the norm $\lVert g\lVert_1$;
(3) formally, the mean of these matrices is an estimate of the ideal density matrix $\rho_{id}$.

Regrettably, the above protocol is purely formal as the multiplication with negative quasiprobabilities
is non physical and could only be achieved in post-processing, e.g., after fully reconstructing the density matrix.  
We thus exploit classical shadows as a powerful tool for obtaining an efficient classical description of the
states which can then be naturally assigned negative quasiprobabilities in classical post-processing.
Indeed,  snapshots are not physical density matrices either, as is apparent in \cref{eq:snapshot}.
We now state our protocol that serves as an unbiased estimator of the ideal state.
\begin{theorem}[PEC shadows]\label{theo:mitigated_shadow}
Given a quasiprobability decomposition $G$ of the ideal circuit  $\mathcal{U}_{\text{circ}}$
from \cref{def:decompose},
and a classical shadow protocol with the POVM measurement $E$ from \cref{eq:idealised_measurements},
we define PEC shadows as the set
$\emshadow := \lbrace (g_{\underline{k}},\mathcal{G}_{\underline{k}}, E_l) \rbrace_{\underline{k}, l}$
and define the corresponding PEC snapshot as
\begin{align}\label{eq:mitigated_shadow_estimator}
    \hat{\rho}_{id} := \hat{\rho}_{\underline{k}, l}   =  \lVert g \rVert_1 \, \mathrm{sign}(g_{\underline{k}}) C_{E}^{-1} (E_{l}).
\end{align}
We will often use the notation $\hat{\rho}_{id}$ to abbreviate $\hat{\rho}_{\underline{k}, l} $ as it is an unbiased estimator of the ideal density matrix $\rho_{id}$
such that $\mathbb{E}[ \hat{\rho}_{id} ] = \underset{\underline{k},l}{\mathbb{E}} [\hat{\rho}_{\underline{k}, l} ]  = \rho_{id}$.
\end{theorem}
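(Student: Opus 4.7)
The plan is to decompose the joint expectation $\mathbb{E}_{\underline{k},l}[\hat{\rho}_{\underline{k},l}]$ by the tower property, reflecting the two independent random processes in the protocol: first the PEC sampling of a circuit variant $\underline{k} \sim p(\underline{k})$, and then the shadow measurement of the noisy output state $\sigma_{\underline{k}} := \mathcal{G}_{\underline{k}}|0\rangle\langle 0|$, which produces outcome $l$ with probability $q_l(\underline{k}) = \tr[\sigma_{\underline{k}} E_l]$. Conditional on $\underline{k}$, the sign and the norm prefactor are deterministic, so the inner expectation reduces to the standard shadow identity.

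Concretely, the first step is to compute $\mathbb{E}_l\bigl[C_E^{-1}(E_l)\,\big|\,\underline{k}\bigr]$. By linearity of $C_E^{-1}$ and the definition of the measurement channel in \cref{eq:measurement_channel},
\begin{equation*}
\mathbb{E}_l\bigl[C_E^{-1}(E_l)\,\big|\,\underline{k}\bigr] = \sum_{l} \tr[\sigma_{\underline{k}} E_l]\, C_E^{-1}(E_l) = C_E^{-1}\!\Bigl(\sum_l \tr[\sigma_{\underline{k}} E_l]\, E_l\Bigr) = C_E^{-1}\bigl(C_E(\sigma_{\underline{k}})\bigr) = \sigma_{\underline{k}},
\end{equation*}
which is just the single-state unbiasedness of the classical shadow snapshot, now applied to the state produced by the randomly chosen noisy circuit $\mathcal{G}_{\underline{k}}$.

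The second step is to plug this into the outer expectation and invoke \cref{lemma_PEC_unbiased}. Using independence of the two samplings,
\begin{equation*}
\mathbb{E}_{\underline{k},l}[\hat{\rho}_{\underline{k},l}] = \lVert g\rVert_1\, \mathbb{E}_{\underline{k}}\!\left[\mathrm{sign}(g_{\underline{k}})\, \mathbb{E}_l\bigl[C_E^{-1}(E_l)\,\big|\,\underline{k}\bigr]\right] = \lVert g\rVert_1\, \mathbb{E}_{\underline{k}}\!\left[\mathrm{sign}(g_{\underline{k}})\, \sigma_{\underline{k}}\right].
\end{equation*}
The right-hand side is precisely the expectation appearing in \cref{lemma_PEC_unbiased}, so it equals $\rho_{id}$. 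Unpacking this last step explicitly, $\lVert g\rVert_1 \sum_{\underline{k}} p(\underline{k})\,\mathrm{sign}(g_{\underline{k}})\,\sigma_{\underline{k}} = \sum_{\underline{k}} g_{\underline{k}} \mathcal{G}_{\underline{k}}|0\rangle\langle 0| = \mathcal{U}_{\text{circ}}|0\rangle\langle 0| = \rho_{id}$ by \cref{def:decompose}.

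There is no real obstacle here: both ingredients are already proved, and the theorem amounts to composing them correctly. The only point that requires a little care is justifying that the two randomizations really can be treated as independent so that the tower property applies; this is immediate from the protocol, since $\underline{k}$ is drawn in classical post-processing setup before the quantum run, and the shadow measurement outcome $l$ then depends on $\underline{k}$ only through the prepared state $\sigma_{\underline{k}}$. A secondary subtlety worth flagging in the write-up is that $\mathrm{sign}(g_{\underline{k}})$ and $\lVert g\rVert_1$ are deterministic functions of $\underline{k}$ (and a constant, respectively), which is what licenses pulling them through the inner expectation over $l$.
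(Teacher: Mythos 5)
Your proof is correct; it uses exactly the same two ingredients as the paper's proof but organises them in the opposite order. The paper computes $\mathbb{E}_{\underline{k},l}[\hat{\rho}_{\underline{k},l}]$ in one pass: it writes the double sum with weights $p_{\underline{k}}\, q_l$, collects $\lVert g \rVert_1 \tfrac{|g_{\underline{k}}|}{\lVert g \rVert_1}\sign(g_{\underline{k}}) = g_{\underline{k}}$, performs the $\underline{k}$-sum \emph{inside} the trace to recognise $\rho_{id}$, and only then applies $C_E^{-1}\circ C_E = \mathrm{id}$ once to $\rho_{id}$. You instead condition on $\underline{k}$, apply the single-state shadow identity $\sum_l \tr[\sigma_{\underline{k}} E_l]\, C_E^{-1}(E_l) = \sigma_{\underline{k}}$ from \cref{eq:measurement_channel} to each noisy output state, and then finish by invoking \cref{lemma_PEC_unbiased} for the outer average; by linearity the two routes are equivalent, but yours is more modular in that it literally reduces the theorem to \cref{lemma_PEC_unbiased} plus standard shadow unbiasedness (which, incidentally, also makes \cref{cor:noise-boost} immediate), whereas the paper redoes the algebra inline without reusing the lemma. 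One wording nit: $\underline{k}$ and $l$ are \emph{not} independent --- the distribution of $l$ depends on $\underline{k}$ through $\sigma_{\underline{k}}$ --- so the phrase ``using independence of the two samplings'' should be replaced by the law of total expectation (tower property), together with the observation, which you do make, that $\sign(g_{\underline{k}})$ and $\lVert g \rVert_1$ are deterministic given $\underline{k}$ and so pull through the inner conditional expectation; your displayed equations are already consistent with this.
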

Above the averaging $\mathbb{E}[\cdot]$ happens not only 
over the effects $E_l$ indexed by $l$ (all basis transformations and all measurement outcomes),
but additionally we average over all circuit variants indexed by $\underline{k}$.
The reason is that the measurement $E=\lbrace E_{l} \rbrace_{l}$ is not performed on a fixed input density matrix $\rho$
as in conventional shadows
but rather on the quasiprobability decomposition of the ideal state
$\rho_{id} \propto \mathcal{G}_{\underline{k}} \vert 0 \rangle \langle 0 \vert$.
Let us now summarise the resulting experimental protocol.
\begin{itemize}[leftmargin=*]
	\item choose randomly a multi-index $\underline{k}$ according to the probabilities
	$p(\underline{k})$ and store the $\sign ( g_{\underline{k}} )$
	
	\item choose uniformly randomly a unitary rotation $Q_j \in \mathcal{Q}$
	and store its index $j$
	
	\item  execute in a quantum computer the gate sequence $\mathcal{G}_{\underline{k}}$,
		the unitary rotation $Q_j$, perform a measurement in the standard basis
		and finally register its outcome $b$
    
	\item each stored index $(\sign [ g_{\underline{k}} ], j, b)$
	uniquely identifies a classical snapshot   
	$\hat{\rho}_{\underline{k}, l}   =  \lVert g \rVert_1 \, \mathrm{sign}(g_{\underline{k}}) C_{E}^{-1} (E_{l})$
	where recall that $E_l$ is a POVM effect with the index $l = (j,b)$ from \cref{eq:idealised_measurements}  
	
	\item repeat the procedure and collect $\nshot$ classical snapshots to build a classical shadow of the ideal state
	$S(\rho_{id}, \nshot) = \{   (\hat{\rho}_{id})_{1}, (\hat{\rho}_{id})_2, \dots, (\hat{\rho}_{id})_{\nshot} \}$
\end{itemize}

The classical dataset $S(\rho_{id}, \nshot)$ can then be classically post-processed offline
and we detail explicit algorithms for
predicting local properties in \cref{sec:post_proc}. 

Note that PEC shadows produce a distribution of snapshots that is different than directly applying
conventional shadows to a noise-free state $\rhoid$, albeit with an identical mean.
The reason is that each circuit variant $\mathcal{G}_{\underline{k}}$ in \cref{eq_estimator_rhoid}
yields a different distribution of classical snapshots.
For example, in the next section we prove bounds 
on variances of PEC shadows and find that they are indeed increased compared to conventional shadows applied
directly to $\rhoid$.


\subsection{Rigorous performance guarantees}
We first consider the pivotal practical application as predicting error mitigated expected values
of observables $O$ via the estimator $\hat{o} = \tr[O \hat{\rho}_{id}]$.
A key observation is that in error mitigation techniques the ability to predict
noise-free expected values comes at the cost of an
increased statistical variance which implies an increased number of circuit repetitions.
We now bound the variance of any operator's expected value.

\begin{lemma}[variance of linear properties]\label{lemma:expval_variance}
Given an observable $O$ and the PEC snapshot $\hat{\rho}_{id}$ 
from \cref{theo:mitigated_shadow}, the variance of
$\hat{o} = \tr[O \hat{\rho}_{id}]$ can be upper bounded as
\begin{align}
    \var[\hat{o}]  \leq \lVert g \rVert_1^{2} \, \lVert O \rVert_{E}^{2},
\end{align}
where $\lVert \cdot \rVert_{E}^{2}$ is the shadow norm of the observable $O$
as defined in \cref{lemma:shadow_norm}.
When $O$ is a $q$-local Pauli string and we use Pauli basis measurements then
$\lVert O \rVert_{E}^{2} = 3^q$ as we detail in \cref{lemma:shadow_norm}.
\end{lemma}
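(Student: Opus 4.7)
The plan is to upper bound the variance by the raw second moment $\mathbb{E}[\hat{o}^2]$ (since $\var[X] \leq \mathbb{E}[X^2]$) and then to factor this expectation across the two independent sources of randomness: the circuit-variant index $\underline{k}$ drawn from $p(\underline{k}) = |g_{\underline{k}}| / \lVert g \rVert_1$ and the POVM outcome $l$. First I would substitute the snapshot from \cref{theo:mitigated_shadow} and use $\sign(g_{\underline{k}})^2 = 1$ to obtain $\hat{o}^2 = \lVert g \rVert_1^2 \, \tr[O \, C_E^{-1}(E_l)]^2$, so that the sampling-overhead prefactor pulls straight out of every expectation.

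Next I would apply the tower property, writing $\mathbb{E}[\hat{o}^2] = \mathbb{E}_{\underline{k}}\bigl[\mathbb{E}_l[\hat{o}^2 \mid \underline{k}]\bigr]$. Conditional on $\underline{k}$, the measurement outcome $l$ is distributed precisely as it would be if one ran the conventional shadow protocol on the physical state $\rho_{\underline{k}} = \mathcal{G}_{\underline{k}}|0\rangle\langle 0|$; hence the inner expectation is exactly the single-shot second moment from standard shadow theory, evaluated at $\rho_{\underline{k}}$. Invoking \cref{lemma:shadow_norm}, whose shadow norm is a worst-case-over-states upper bound on that quantity, yields $\mathbb{E}_l[\tr[O \, C_E^{-1}(E_l)]^2 \mid \underline{k}] \leq \lVert O \rVert_E^2$ uniformly in $\underline{k}$. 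Taking the outer expectation over $\underline{k}$ then collapses trivially (the bound carries no $\underline{k}$-dependence) to the claimed $\lVert g \rVert_1^2 \lVert O \rVert_E^2$.

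The calculation itself is essentially one line once the reductions are in place, so the main obstacle is conceptual rather than computational: one must be careful that $\lVert O \rVert_E^2$ really is a \emph{state-independent} quantity as it is defined in \cref{lemma:shadow_norm}, since it is this property that allows the outer average over $\underline{k}$ to be discharged without extra work. If instead the shadow norm were defined relative to a fixed state, the bound would pick up a $p(\underline{k})$-weighted combination of shadow norms at the distinct noisy outputs $\rho_{\underline{k}}$, which would in general not simplify to $\lVert g \rVert_1^2 \lVert O \rVert_E^2$. It is therefore essential that the referenced shadow norm follow the Huang--Kueng--Preskill convention of maximising over states; after that, the whole argument is a short expectation manipulation together with the elementary observation $\sign(g_{\underline{k}})^2 = 1$.
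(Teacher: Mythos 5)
Your proposal is correct and follows essentially the same route as the paper's proof: bound $\var[\hat{o}]$ by the second moment, pull out $\lVert g \rVert_1^{2}$ via $\sign(g_{\underline{k}})^{2}=1$, and use that the shadow norm of \cref{lemma:shadow_norm} is an operator-norm (hence state-independent) bound on $\sum_l \tr[\rho E_l]\tr[O\,C_E^{-1}(E_l)]^2$. The only cosmetic difference is ordering: you condition on $\underline{k}$ and bound each noisy output state $\mathcal{G}_{\underline{k}}\vert 0\rangle\langle 0\vert$ separately, whereas the paper first averages the circuit variants into the CPTP map $\Omega$ and then maximises over input states $\sigma$ before invoking $\lVert \Gamma \rVert_\infty$ — the two manipulations are interchangeable and yield the identical bound.
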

We explain in \cref{app:numerics} that we can account for the cost of readout-error mitigation
	via the above shadow norm, which becomes
$3^q(1{-}2 \alpha)^{-2q}$ when considering a simple readout-error model with probability at most $\alpha$.
Observe that the above variance depends on two factors:
The first one is the squared shadow norm $\lVert O \rVert_{E}^{2}$ which
determines the sample complexity of 
conventional shadows~\cite{shadow_huang_2020};
The second factor is a multiplicative term $\lVert g \rVert_1^{2}$
which accounts for the well-known measurement overhead associated with
the conventional PEC protocol~\cite{gambetta_error_mitig,practical_QEM,cai2022quantum}

We defer further discussion to \cref{app:shadow_norm} where we
also explain how the shadow norm depends on the unitary ensemble $\mathcal{Q}$:
While we only state explicitly the shadow norm for the practically most important ensemble of Pauli basis measurements,
we note that bounds for other ensembles are immediately available in the literature~\cite{wan_matchgate_2022,shadow_huang_2020,bertoni2022shallow}.
Furthermore, we also explain in \cref{app:variance} that the above bound is
expected to be pessimistic due to an even more significantly overestimated constant prefactor
than in conventional shadows.

Following the approach of~\cite{shadow_huang_2020} 
we use concentration properties of the median of means estimator to derive rigorous sample
complexities:
for the simultaneous prediction of many observables $O_{1}, \dots, O_{M}$ 
we exponentially suppress statistical outliers
by splitting the PEC shadows $S(\rho_{id}, \nshot)$ into independent batches
and then computing a median of the means
as we detail in \cref{app:proof_m_properties}.
The resulting bounds depend on two performance metrics as
the accuracy $\epsilon$ and the success probability $\delta$.
\begin{theorem}[informal summary]\label{theo:m_properties}
Given the PEC shadows 
$\emshadow := \lbrace (g_{\underline{k}},\mathcal{G}_{\underline{k}}, E_l) \rbrace_{\underline{k}, l}$
from \cref{theo:mitigated_shadow} we want to simultaneously estimate expected values of $M$ operators in the
ideal state as $\tr[O_{1} \rho_{id}] \dots \tr[O_{M} \rho_{id}]$.
Using a median of means estimator, the number of shots required to achieve performance
parameters $\epsilon, \delta \in [0,1]$ is 
\begin{align}{\label{eq:error_bound}}
    \nshot  = 32 \epsilon^{-2}
     \, \,
      \log (\frac{M}{\delta})
      \, \,
      \lVert g \rVert_1^{2}
      \, \, 
      \max_{1 \leq k \leq M}{\lVert O_{k} \rVert_{E}^{2}},
\end{align}
where we use the largest shadow norm $\lVert O_{k} \rVert_{E}^{2}$.
Refer to \cref{theo:m_prop_formal} for a formal statement of this theorem.
\end{theorem}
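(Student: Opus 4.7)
The plan is to follow the classical median-of-means recipe used for the standard shadow protocol in Huang--Kueng--Preskill and simply feed it the two ingredients we have already established: unbiasedness of the PEC snapshot (\cref{theo:mitigated_shadow}) and the per-snapshot variance bound $\var[\hat{o}] \le \lVert g\rVert_1^{2}\,\lVert O\rVert_E^{2}$ (\cref{lemma:expval_variance}). Since the PEC snapshots $(\hat{\rho}_{id})_i$ are i.i.d.\ over the joint randomness $(\underline{k},l)$, every linear functional $\hat{o}_i = \tr[O\,(\hat{\rho}_{id})_i]$ inherits both properties. The rest of the argument is a two-level concentration estimate that I would organise as follows.

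First I would partition the $\nshot$ snapshots into $K$ equal batches of size $N$, so that $\nshot = NK$, and for each observable $O_k$ form the batch means $\bar{o}_k^{(j)} = \frac{1}{N}\sum_{i} \hat{o}_k^{(j,i)}$ for $j=1,\dots,K$, and finally the median $\tilde{o}_k = \mathrm{median}\{\bar{o}_k^{(1)},\dots,\bar{o}_k^{(K)}\}$. By independence across snapshots the variance of each batch mean is $\var[\bar{o}_k^{(j)}] \le \lVert g\rVert_1^{2}\lVert O_k\rVert_E^{2}/N$, and Chebyshev's inequality then yields
\begin{equation*}
\Pr\!\big[|\bar{o}_k^{(j)} - \tr(O_k\rho_{id})| > \epsilon\big] \;\le\; \frac{\lVert g\rVert_1^{2}\lVert O_k\rVert_E^{2}}{N\epsilon^{2}}.
\end{equation*}
Choosing $N = \lceil c_1\epsilon^{-2} \lVert g\rVert_1^{2} \max_k \lVert O_k\rVert_E^{2}\rceil$ for a suitable constant $c_1$ (e.g.\ $c_1 = 4$) makes the right-hand side at most $1/4$, so each batch mean lies within $\epsilon$ of the true ideal expectation with probability at least $3/4$.

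Next I would apply a Hoeffding bound to the indicator variables of the ``good'' batches to control the median: if strictly more than $K/2$ batches are within $\epsilon$ of $\tr(O_k\rho_{id})$ then the median is as well, so by Hoeffding
\begin{equation*}
\Pr\!\big[|\tilde{o}_k - \tr(O_k\rho_{id})| > \epsilon\big] \;\le\; \exp(-c_2 K)
\end{equation*}
for a universal $c_2 > 0$. Taking a union bound over the $M$ observables and demanding the failure probability be at most $\delta$ forces $K \ge c_3 \log(M/\delta)$. Multiplying the requirements on $N$ and $K$ gives the shot count $\nshot = NK = O(\epsilon^{-2}\log(M/\delta)\lVert g\rVert_1^{2}\max_k \lVert O_k\rVert_E^{2})$, and tracking the constants carefully through Chebyshev, Hoeffding, and the union bound pins the prefactor to the stated value $32$ (as in the original shadow analysis, with the $\lVert g\rVert_1^{2}$ factor simply inherited from \cref{lemma:expval_variance}).

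The only nontrivial obstacle is bookkeeping: one must verify that the per-snapshot variance bound actually controls the variance of a batch mean, which in turn requires i.i.d.\ sampling of the joint index $(\underline{k},l)$; this is guaranteed by the protocol stated after \cref{theo:mitigated_shadow}, where $\underline{k}$ and the basis index are drawn independently per shot. No new structural insight beyond \cref{theo:mitigated_shadow} and \cref{lemma:expval_variance} is needed; the proof is essentially the same as for standard shadows, with the multiplicative PEC overhead $\lVert g\rVert_1^{2}$ absorbed into the variance bound, and the formal version is deferred to \cref{app:proof_m_properties}.
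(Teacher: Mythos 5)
Your proposal is correct and follows essentially the same route as the paper: the formal version in \cref{app:proof_m_properties} also combines the variance bound of \cref{lemma:expval_variance} with a median-of-means estimator (batch size $N_{batch}=4\lVert g\rVert_1^{2}\max_k\lVert O_k\rVert_E^{2}/\epsilon^{2}$, $K=8\log(M/\delta)$) and a union bound over the $M$ observables, yielding the prefactor $32$. The only cosmetic difference is that the paper invokes the median-of-means concentration inequality as a cited black box, whereas you re-derive it via Chebyshev plus Hoeffding, which is just the standard proof of that inequality.
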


\begin{figure}
	\centering
	\includegraphics[width=0.483\textwidth]{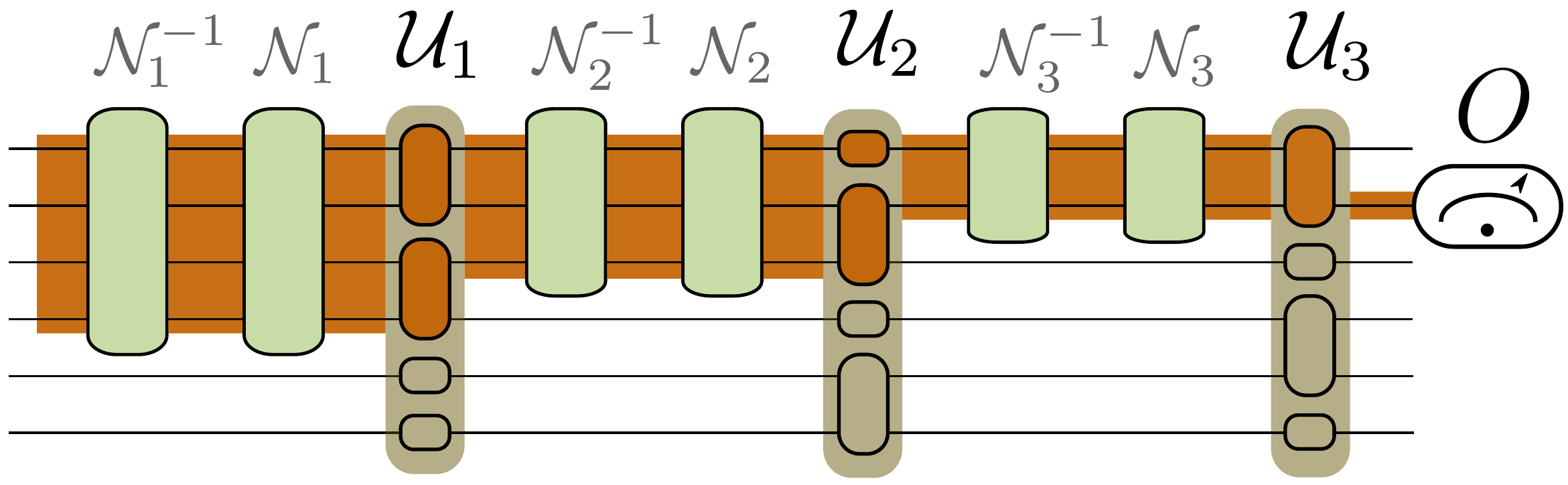}
	\caption{Illustration of the light cone of an observable $O$
        which is represented by the measurement apparatus
        and only acts nontrivially on the second (from top) qubit.
        The orange area indicates the qubits which are contained in the light cone $\mathcal{I}$ of the observable
        with respect to the ideal quantum circuit (orange boxes)
        $\mathcal{U}_{3} \mathcal{U}_{2} \mathcal{U}_{1}$.
        To simplify derivations we assume the
        gate noise channels $\mathcal{N}_{k}$ are local (light green boxes) such that 
        they are contained within the lightcone $\mathcal{I}$
        but our results can be extended to non-local models via~\cite{locality_error_mitigation_ibm_2023}.
        }
	\label{fig:light_cone}
\end{figure}


Finally, we consider predicting non-linear properties of the state of the form $\tr[ O (\rhoid)^m ]$.
Following ref.~\cite{shadow_huang_2020} we use the fact that a polynomial function in the quantum state
can be written as a linear function in tensor products of the state, for example,
$\tr[O(\rhoid)^{2}] = \tr[\Tilde{O} \rhoid \otimes \rhoid ]$ with $\Tilde{O} = \mathrm{SWAP} (O \otimes \openone)$
where the SWAP operator swaps the two copies.
In \cref{app:nonlin} we detail our construction using U-statistics to derive unbiased
estimators in terms of the classical snapshots, e.g., for $m=2$ we
select all distinct pairs of snapshots $(\hat{\rho}_{id})_{i}  \otimes  (\hat{\rho}_{id})_{j}$ with
$i \neq j$. We can bound the variance of any non-linear property as follows.
\begin{theorem}[variance of nonlinear properties] \label{theo:nonlin}
Given our PEC snapshots $\hat{\rho}_{id}$ from \cref{theo:mitigated_shadow}
we can estimate polynomial properties of degree $m$ of the ideal state $\rhoid$
via U-statistics of tensor products of all distinct snapshots.
The number of samples required to predict the non-linear property
scales as $\nshot \in \mathcal{O}(\vert \vert g \vert \vert_{1}^{2m} / \epsilon^2)$
for a desired accuracy $\epsilon$.
Refer to \cref{app:nonlin} for details.
\end{theorem}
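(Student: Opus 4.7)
The plan is to follow the U-statistic construction used in the classical shadows literature, then carefully track the PEC prefactor $\lVert g\rVert_1$ as it propagates through tensor products of snapshots. First I would reduce the non-linear problem to a linear one on $m$ copies: write $\tr[O\rho_{id}^m] = \tr[\tilde O\, \rho_{id}^{\otimes m}]$, with $\tilde O$ an explicit permutation-dressed version of $O\otimes \openone^{\otimes(m-1)}$ (for $m=2$ this is $\mathrm{SWAP}(O\otimes\openone)$; for general $m$ an $m$-cycle permutation operator). By \cref{theo:mitigated_shadow} each snapshot is an independent unbiased estimator of $\rho_{id}$, so on independent indices $i_1,\ldots,i_m$ the product $(\hat\rho_{id})_{i_1}\otimes\cdots\otimes(\hat\rho_{id})_{i_m}$ is an unbiased estimator of $\rho_{id}^{\otimes m}$, which immediately identifies the correct kernel.

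Second, I would define the U-statistic estimator using all ordered $m$-tuples of distinct snapshot indices,
\begin{equation*}
\hat f_m \;=\; \frac{(N_s-m)!}{N_s!}\sum_{\substack{i_1,\ldots,i_m\\\text{all distinct}}} \tr\bigl[\tilde O\,(\hat\rho_{id})_{i_1}\otimes\cdots\otimes(\hat\rho_{id})_{i_m}\bigr],
\end{equation*}
and verify unbiasedness using independence of the $(\underline{k},l)$-samples across runs. The use of distinct indices is essential: reusing the same run would multiply the circuit sign $\mathrm{sign}(g_{\underline k})^2=1$ and spoil the PEC cancellation of the bias.

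Third, I would bound the variance. Each kernel evaluation $h_{i_1\ldots i_m}=\tr[\tilde O\,(\hat\rho_{id})_{i_1}\otimes\cdots\otimes(\hat\rho_{id})_{i_m}]$ has, by independence of the $m$ snapshots, second moment bounded by $\lVert g\rVert_1^{2m}\,\lVert\tilde O\rVert_{E^{\otimes m}}^{2}$, where the factor $\lVert g\rVert_1^{2m}$ is extracted from the $m$ prefactors in \cref{eq:mitigated_shadow_estimator} and the shadow-norm factor is handled as in \cref{lemma:expval_variance} applied on the $m$-fold product POVM. Standard U-statistic theory (Hoeffding's decomposition) then yields $\var[\hat f_m]\le \tfrac{m}{N_s}\,\mathbb E[h^2]$ up to constants for $N_s\ge 2m$, giving $\var[\hat f_m]\in \mathcal O(\lVert g\rVert_1^{2m}/N_s)$.

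Finally, I would invoke Chebyshev (or median-of-means, as in the proof of \cref{theo:m_properties}, for simultaneous prediction of many non-linear observables) to convert this variance bound into the stated sample complexity $N_s\in\mathcal O(\lVert g\rVert_1^{2m}/\epsilon^2)$. The main obstacle I anticipate is the variance step: although the independence of snapshots makes the diagonal terms of $\mathbb E[\hat f_m^2]$ factorise cleanly, the off-diagonal pairs of $m$-tuples that share $r<m$ indices require the Hoeffding decomposition to show that the leading $1/N_s$ term is the right one and that the shared-index contributions do not introduce additional $\lVert g\rVert_1$ factors beyond $2m$; working out this combinatorics carefully — while keeping the $\tilde O$-dependent norm in a form that matches the shadow-norm framework of \cref{app:shadow_norm} — is the technically delicate part of the argument.
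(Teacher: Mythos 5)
Your proposal is correct and follows essentially the same route as the paper's Appendix proof: reduce $\tr[O\rho_{id}^m]$ to a linear functional of $\rho_{id}^{\otimes m}$, form the order-$m$ U-statistic over distinct snapshot tuples, bound the kernel's second moment by $\lVert g\rVert_1^{2m}$ times a product shadow-norm via the same convex-combination channel argument used for the linear case, and use the Hoeffding U-statistic variance machinery to get the $\mathcal{O}(1/N_s)$ scaling and hence $N_s\in\mathcal{O}(\lVert g\rVert_1^{2m}/\epsilon^2)$. The paper handles the shared-index combinatorics you flag by evaluating Hoeffding's exact variance formula term by term (bounding each conditional kernel by $\lVert g\rVert_1^{2d}\le\lVert g\rVert_1^{2m}$), which confirms your expectation that no extra $\lVert g\rVert_1$ factors arise.
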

One can similarly apply a median of means estimator to enable simultaneous prediction of
many non-linear properties: We detail an explicit protocol in the next section for
simultaneously estimating many local Rényi entropies. Note that measurement overhead $\vert \vert g \vert \vert_{1}^{2m}$
grows with the $2m^{th}$ power of the quasiprobability norm consistent with our effective construction
of $m$ copies of the original noisy circuit which leads to an effective $m$-fold increase
in the number of noisy gates.


\subsection{Classical post-processing algorithms\label{sec:post_proc}}

In this section we summarise reconstruction algorithms
for the practically pivotal scenario of Pauli basis measurements from \cref{sec:widealised_measurements}.
\begin{algorithm}[local Pauli strings]\label{alg:reconstruct_algo}
	The expected value of a $q$-local Pauli observable $P$ in a PEC snapshot
	can be calculated analytically  as
	\begin{equation*}
		\tr[ P \hat{\rho}_{\underline{k}, l} ] = \lVert g \rVert_1 \, 3^q \, \sign(g_{\underline{k}}) f(b, Q_j).
	\end{equation*}
	The reconstruction algorithm iterates over all snapshots in the shadow $S(\rho_{id}, \nshot)$
	and calculates the median of means of the above expression
	using a number of batches provided by the user.
	Here $f(b,Q_j) \in \{\pm1, 0\}$ results in $0$ if the measurement bases
	in $Q_j$ are incompatible with $P$ and $\pm 1$ if the measurement bases are compatible with $P$
	while the sign is determined by the bitstring $b$.
        The algorithm has runtime $\mathcal{O}( q \nshot )$.
	\qed
\end{algorithm}
We defer the detailed derivation to \cref{app:reconstruct_algo}.
Note that the above error mitigated reconstruction algorithm deviates from that of~\cite{shadow_huang_2020} as
the individual snapshot outcomes are multipiled with 
the norm $\lVert g \rVert_1$ and signs of the quasiprobabilities $\mathrm{sign}(g_{\underline{k}})$.

Since we reconstruct $q$-local Pauli observables, we can significantly reduce the sample variacne via light-cone
arguments~\cite{hierarchy_light_cones_tran_2020,locality_error_mitigation_ibm_2023}.
In \cref{fig:light_cone} we illustrate the light cone that an observable creates
with respect to the ideal unitary circuit
$\mathcal{U}_{circ}$.
To simplify the following arguments we assume local noise models to guarantee the same light cone is
valid for all gate sequences $\mathcal{G}_{\underline{k}}$, however,
it is straightforward to extend the arguments to non-local noise following~\cite{locality_error_mitigation_ibm_2023}.

We observe that for each gate that is not within the light cone of the observable $P$
we can ``turn off'' PEC thereby not wasting our measurement budget on mitigating noisy
gates that do not affect our observable.
\begin{algorithm}[light cones]\label{alg:light_cone}
	Given a $q$-local Pauli string $P$
	we define the set of indexes of all gates in the light cone of the obserable as
	$\mathcal{I} := \{ l \, | \, \mathcal{U}_l \, \text{is in the light cone of $P$} \}$.
     We then simply use \cref{alg:reconstruct_algo} with a modified set of quasiprobabilities
     from \cref{def:decompose} as
	\begin{equation*}
		\lVert \tilde{g} \rVert_1 = \prod_{l \in \mathcal{I} } \lVert \gamma^{(l)} \rVert_1,
		\quad \text{and}  \quad
		\sign( \tilde{g}_{\underline{k}} ) = \prod_{l \in \mathcal{I} } \sign \gamma^{(l)}_{k_l}.
	\end{equation*}
   The algorithm has the same asymptotic runtime $\mathcal{O}( q \nshot )$ as \cref{alg:reconstruct_algo}
   and only incurs a negligible
   preprocessing time to determine the index set $\mathcal{I}$ specifically for each $P$.
	\qed
\end{algorithm}
Refer to \cref{app:light_cone} for a derivation.
The measurement cost $\lVert \tilde{g} \rVert_1$ is thus determined by the
number of gates in the light cone of $P$
rather than by the total number of gates $\nu$.
Imagine, for example, noisy quantum gates with $\lVert \gamma^{(l)} \rVert_1 = 1{+}p$;
the measurement cost is determined by $(1{+}p)^{|\mathcal{I}|}$ as opposed to 
the worst case 
$\lVert \tilde{g} \rVert_1  = (1{+}p)^\nu$ where $\nu$ is the total number of noisy
gates as detailed in Ref.~\cite{locality_error_mitigation_ibm_2023}.
A significant advantage of this procedure is that it does not require one to modify the 
experimental protocol, i.e., the noise in all gates can be mitigated in the shadows.

Finally, we consider estimating Rényi entropies via the purities $\tr\left(\rho_Q^2\right)$ as $R_Q := -\log \tr\left(\rho_Q^2\right)$ where $\rho_Q$
is the reduced density matrix of the subsystem $Q$.
\begin{algorithm}[local purities]\label{alg:entropy}
    Given a subsystem as the set of qubits $Q =\left\{q_1, ..., q_m\right\}$,
    an unbiased estimator for the respective purity is obtained as
    \begin{equation}
   		\tr\left(\hat{\rho}_Q^2\right)  = \lVert g \rVert_1^2 \sign(g_{i})\sign(g_{j}) f( i, j, Q).
    \end{equation}
	Here $i$ and $j$ abbreviate indexes of the snapshots
    as, e.g., $i = (\underline{k}, l)$ and $i \neq j$.
    The algorithm iterates over all distinct pairs of snapshots in the shadow $S(\rho_{id}, \nshot)$
    and calculates  the median of means of the above expression.
    The factors $f( i, j, Q)$ depend only on whether the measurement bases and outcome
    bitsrings are identical within the subsystem $Q$.
    The algorithm has a runtime $\mathcal{O}( |Q| \nshot^2 )$.
   	\qed
\end{algorithm}
We defer the detailed derivation to \cref{app:reconstruct_algo}.
Note that the runtime is linear in the subsystem size and quadratic in the number of shots.
For sufficiently small subsystems and large numbers of shots it might be preferred to
use the exponentially $\mathcal{O}( 4^{|Q|} \nshot )$ scaling algorithm of \cite{shadow_huang_2020}.

\section{Further error mitigation techniques\label{sec:other_mitigation}}

\subsection{Error extrapolated shadows}
The key idea behind zero-noise extrapolation resides in the possibility of
increasing the noise in the circuit and extrapolating expected values back to the case of zero noise.
The approach is intuitive to use, requires less resources than PEC but yields a biased estimator.
A non-trivial aspect, however, is choosing the correct model function for the extrapolation
which has been extensively discussed in the literature~\cite{cai2022quantum,practical_QEM, kandala2019error};
typical models include a linear function, an exponential function or a linear combination of multiple exponentials.

We consider extrapolation as a means for mitigating errors in properties
extracted from classical shadows. The key ingredient we require is the ability to generate a
collection of shadows at different noise strengths $S(\rho_{p_0}, N_s), ..., S(\rho_{p_n}, N_s)$
such that $p_k \geq p_0$ and $p_0$ is the device's lowest possible noise strength.
These shadows enable us to extract the expected values $f_{m}(p) = \tr [ O_{m} \rho_p ]$ 
at a given noise level $p$.
By fitting a suitable model function $\tilde{f}_m(p)$, e.g., a linear model, to this
dataset we can approximate ideal properties of the state using an extrapolation via the limit
\begin{equation*}
 \tr [ O_m \rho_{id} ]  \approx	 \lim_{p \rightarrow 0}  \tilde{f}_m(p).
\end{equation*}

While we could certainly leverage existing techniques for
physically increasing noise rates in a circuit to obtain $S(\rho_{p}, \nshot)$~\cite{kandala2019error,cai2022quantum},
we can also exploit the power and flexibility
of the previously derived PEC shadow approach:
Instead of considering the quasiprobability representation of the ideal circuit in \cref{def:decompose}
we can rather decompose the noise-boosted circuits as
$\mathcal{G}_{circ}(p)  = \sum_{\underline{k}} g_{\underline{k}}(p) \mathcal{G}_{\underline{k}}$
with non-negative probabilities $g_{\underline{k}}(p)$.
For example,  in the case of local depolarising noise the circuit variants $\mathcal{G}_{\underline{k}}$
are simply obtained by randomly inserting Pauli $X$, $Y$ or $Z$ operations after each noisy gate
with probabilities $p{-}p_0$.
Furthermore, Lindblad-Pauli learning directly gives access to the continuous set of circuits
$\mathcal{G}_{circ}(p)$~\cite{berg2022probabilistic}.

Let us now state a corollary to \cref{theo:mitigated_shadow} that allows us to obtain the
shadows of error boosted states $\rho_p := \mathcal{G}_{circ}(p) \rho_{ref}$.
\begin{corollary}[error-boosted shadows]\label{cor:noise-boost}
	We consider the parametric quasiprobability decomposition $G$ as noise-boosted circuits  $\mathcal{G}_{circ}(p)$
	with $p \geq p_0$.
	The PEC shadows $\emshadow:= \lbrace (g_{\underline{k}}(p),\mathcal{G}_{\underline{k}}, E_l) \rbrace_{\underline{k}, l}$
	from \cref{theo:mitigated_shadow} result in the simplified 
	snapshots as $ \hat{\rho}_{p} := \hat{\rho}_{p, (\underline{k},l)} = C_{E}^{-1} (E_{l})$
	due to $\sign (g_{\underline{k}}(p) ) = +1$ and $\lVert g(p) \rVert_1 = 1$.
	It follows that $\hat{\rho}_{p}$ is an unbiased estimator of the noise-boosted
	density matrix $\rho_{p}$
	such that $\mathbb{E}[\hat{\rho}_{p}] =\underset{\underline{k},l}{\mathbb{E}} [\hat{\rho}_{p, (\underline{k},l)}]  = \rho_{p}$.
\end{corollary}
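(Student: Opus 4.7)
The plan is to view \cref{cor:noise-boost} as an immediate specialisation of \cref{theo:mitigated_shadow} to the regime where the quasiprobability decomposition is already a genuine probability distribution. The key physical observation is that \emph{adding} noise to an already-noisy circuit (as opposed to inverting noise) can always be implemented probabilistically, for instance by randomly inserting Pauli operators after each gate with probability $p - p_0$ as noted in the preceding paragraph of the text. Consequently the coefficients $g_{\underline{k}}(p)$ in the decomposition $\mathcal{G}_{circ}(p) = \sum_{\underline{k}} g_{\underline{k}}(p) \mathcal{G}_{\underline{k}}$ are non-negative, and the factorised structure of \cref{eq:product} together with the normalisation of each single-gate distribution yields $\sum_{\underline{k}} g_{\underline{k}}(p) = 1$.

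First I would record the two immediate algebraic consequences: (i) $\sign(g_{\underline{k}}(p)) = +1$ for every $\underline{k}$, and (ii) $\lVert g(p) \rVert_1 = \sum_{\underline{k}} g_{\underline{k}}(p) = 1$ by the product-of-sums norm formula of \cref{def:decompose}, since each single-gate factor in that product equals unity. Substituting these two identities into the estimator \cref{eq:mitigated_shadow_estimator} collapses the generic PEC snapshot to the claimed simplified form $\hat{\rho}_{p, (\underline{k}, l)} = C_{E}^{-1}(E_l)$, with no dependence on $\underline{k}$ surviving in the formula. Note however that $\underline{k}$ still enters the overall randomness, because the distribution of the outcome $l$ depends on which circuit variant $\mathcal{G}_{\underline{k}}$ was actually executed; only the multiplicative prefactor of the snapshot simplifies.

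For the unbiasedness claim I would invoke \cref{theo:mitigated_shadow} with $\mathcal{U}_{circ}$ formally replaced by the physically realisable channel $\mathcal{G}_{circ}(p)$ and $\rho_{id}$ replaced by $\rho_p := \mathcal{G}_{circ}(p) \rho_{ref}$. The one point that deserves a careful sentence is that the proof of \cref{theo:mitigated_shadow} nowhere uses unitarity of the target channel and merely relies on linearity together with the decomposition into operationally implementable $\mathcal{G}_{\underline{k}}$; once this is observed the same averaging argument transfers verbatim, yielding $\mathbb{E}_{\underline{k},l}[\hat{\rho}_{p,(\underline{k},l)}] = \sum_{\underline{k}} g_{\underline{k}}(p) \sum_l \tr[E_l \mathcal{G}_{\underline{k}} \rho_{ref}] \, C_{E}^{-1}(E_l) = C_{E}^{-1}(C_{E}(\rho_p)) = \rho_p$. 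There is therefore no genuine technical obstacle here: the proof reduces to a short substitution once one has verified both that noise boosting is probabilistic (so the PEC overhead $\lVert g(p) \rVert_1$ collapses to one) and that the machinery of \cref{theo:mitigated_shadow} applies to general linear channels rather than only unitary ones.
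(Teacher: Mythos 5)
Your proposal is correct and follows essentially the route the paper intends: the corollary is stated as an immediate specialisation of \cref{theo:mitigated_shadow} to a decomposition with non-negative coefficients summing to one, so that $\sign(g_{\underline{k}}(p))=+1$ and $\lVert g(p)\rVert_1=1$ collapse the snapshot to $C_E^{-1}(E_l)$, and unbiasedness follows from the same averaging argument with $\rho_{id}$ replaced by $\rho_p=\mathcal{G}_{circ}(p)\rho_{ref}$. Your explicit remark that the proof of \cref{theo:mitigated_shadow} uses only linearity (not unitarity of the target channel) is a correct and worthwhile clarification, but it does not constitute a different method.
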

A significant advantage in boosting noise via $p \geq p_0$
rather than reducing it is that now every
quasiprobability is non-negative $g_{\underline{k}}(p) \geq 0$ and thus we do not
incur a measurement overhead in \cref{theo:m_properties} via $\lVert g(p) \rVert_{1} = 1$.
Nevertheless, the extrapolated value indeed suffers from an increased variance
which implies an increased number of samples
and details can be found in the literature~\cite{cai2022quantum}. 

Note that the above scheme can be applied beyond the estimation of expectation values.
For instance, one can in principle use shadows to reconstruct partial density
matrices $\hat{\rho}_p$ at different noise strengths $p$ and apply ZNE to individual matrix entries. 
However, note that ZNE might require different kinds of model
functions $f(p)$ for different properties, e.g., non-linear models for
predicting non-linear properties of the state.
In contrast, the great advantage of PEC shadows is that it provides
an unbiased estimator for the entire quantum state.

\subsection{Symmetry verified shadows}
Symmetry verification is another leading quantum error mitigation
technique~\cite{mcardleErrorMitigatedDigitalQuantum2019,bonet-monroigLowcostErrorMitigation2018};
It exploits that often the ideal states to be prepared $\rhoid$ are pure states that
obey certain problem specific symmetry group operations described by $S \in \mathbb{S}$.
The fact that the ideal state is symmetric 
then implies that it ``lives in'' the subspace defined by the projection operator
\begin{align*}
	\Pi_{\mathbb{S}} = \frac{1}{|\mathbb{S}|} \sum_{S \in \mathbb{S}} S,
\end{align*}
which satisfies $\Pi_{\mathbb{S}}^2 = \Pi_{\mathbb{S}}$. 

Given a noisy state $\rho$, one might be able to measure the above symmetries (in fact their generators are sufficient)
via, e.g., Hadamard-test circuits, and retain only circuit
runs that produce the correct symmetry outcomes~\cite{cai2022quantum,PhysRevA.105.022441}. Such post-selection
projects the noisy state back into this symmetry subspace
producing the effective output state as
$\rho_{\textrm{sym}} = \Pi_{\mathbb{S}}\rho \Pi_{\mathbb{S}} / \Tr(\Pi_{\mathbb{S}} \rho)$.
We can apply conventional shadow tomography to this symmetry-verified state $\rho_{\textrm{sym}}$
thereby effectively obtaining error mitigated shadows, i.e., an unbiased estimator of $\rho_{\textrm{sym}}$.
The sampling overhead of this post-selection technique is $\Tr(\Pi_{\mathbb{S}} \rho)^{-1}$
the inverse of the fraction of circuit runs that pass the symmetry verification process. 

Instead of post-selection, we can also perform symmetry verification at the
post-processing stage. Suppose we are interested in the expectation value of
the ideal state with respect to the target observable $O$,
the target expectation value can be written as
\begin{align*}
	\Tr(O\rho_{\textrm{sym}}) = \frac{\Tr(O\Pi_{\mathbb{S}}\rho \Pi_{\mathbb{S}})}{\Tr(O\Pi_{\mathbb{S}} \rho)} = \frac{1}{|\mathbb{S}|} \frac{\sum_{S,S' \in \mathbb{S}}\Tr(SOS'\rho)}{\sum_{S \in \mathbb{S}}\Tr(S \rho)}
\end{align*}

Conventional shadow tomography can be used well in practice
for estimating  $SOS'$ and $S$ for all $S, S' \in \mathbb{S}$
when the symmetries are sufficiently local, i.e., they are supported on at most
weight-$s$ Pauli operators.
Then, given a Pauli observable $O$ of weight at most $q$, 
the effective observable $SOS'$ is then at most of weight-$(2s{+}q)$.
However, the sample complexity of conventional shadows with Pauli measurements
grows exponentially with the weight of the Pauli string and it is thus  crucial
that the total weight $2s{+}q$ be reasonably small.

\begin{figure*}[tb]
	\centering
	\includegraphics[width=\textwidth]{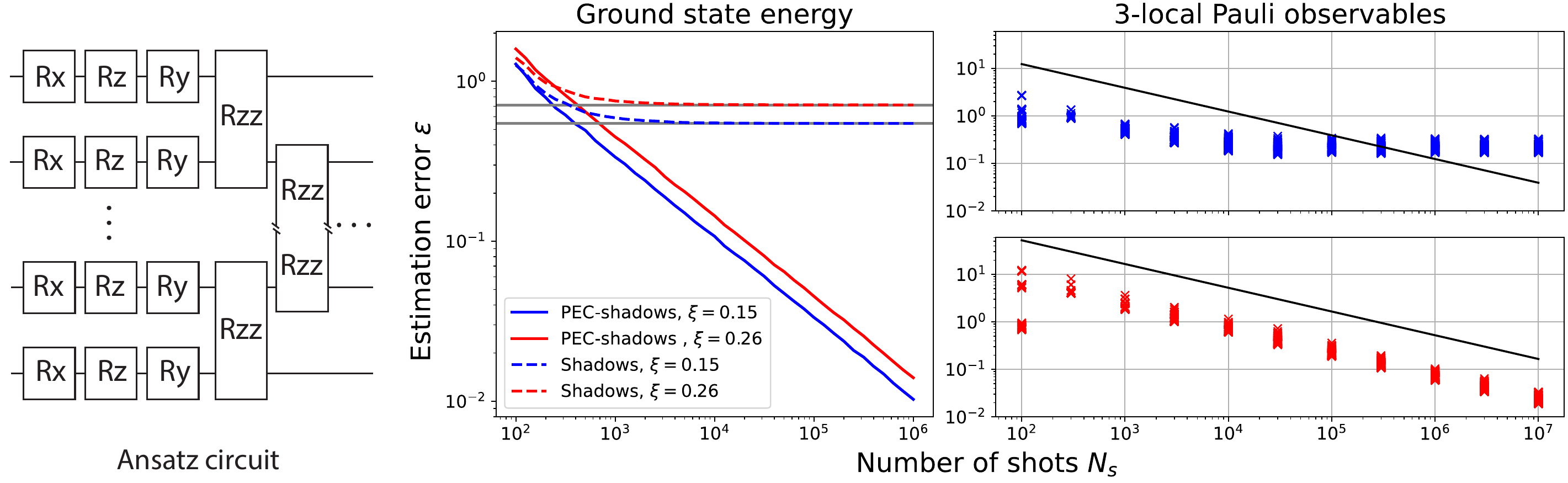}
	\caption{
		(left) A noisy variational Hamiltonian ansatz is used to prepare the ground state of \cref{eq:spin_ring_hamiltonian}
		but our aim is to learn properties of the noise-free state.
		(middle) Energy estimation errors for different noise strengths with conventional shadows (dashed blue, dashed red) 
		and with PEC shadows including readout error mitigation (solid blue, solid red). 
		Bias (grey solid lines) is introduced when the ground state energy $\tr(\rho\mathcal{H})$ is directly estimated
		from the noisy quantum state $\rho$.
		Error mitigated shadows are unbiased as they
		estimate $\tr(\rhoid\mathcal{H})$.
		Increasing the circuit error rate $\xi$ (blue vs. red) increases the bias in standard shadows (dashed blue vs. dashed red)
		and increases the variance of the error mitigated shadows (solid blue vs. solid red). 
		Each data point is an average over $10^4$ experiments of a fixed shot budget $\nshot$.
		(right)
		Error in simultaneously estimating all 3-local Pauli strings without (blue) and with (red) PEC and readout error mitigation
		-- only the 200 observables of the highest estimation error are shown and a circuit error rate $\xi \approx 0.72$ is assumed.
		Errors are significantly below our rigorous bounds from
		\cref{theo:m_properties} (which also take into account the overhead due to readout errors)
		for PEC shadows but the errors for conventional shadows can be above
		their respective bounds from \cite{shadow_huang_2020} due to bias and readout errors (right end of blue).
	}
	\label{fig:error_gs_estim}
\end{figure*}

For example, a typically used symmetry in fermionic simulation is the fermionic particle number parity
which is, however, usually a high-weight operator for standard encodings such as the
Jordan-Wigner encoding. Nevertheless, one can use encodings that come with inherent local
symmetry generators like Majorana loop encodings~\cite{jiangMajoranaLoopStabilizer2019},
or even implement the circuit using some small quantum codes with local stabilisers~\cite{mccleanDecodingQuantumErrors2020}.
However, even if these symmetry \emph{generators}
are local, the number of generators scales with the number of
qubits thus some symmetries they generate are still high-weight.
Hence, in order to efficiently use shadow techniques, we can apply
verification using a constant number of local symmetry generators,
such that the highest-weight symmetry that can be generated is
upper-bounded by some constant. 

We also note that the sampling cost can be reduced when the target
observable $O$ commutes with the symmetry projector
$\Pi_{\mathbb{S}}$ which is often the case in typical
applications. In such a scenario, $\Pi_{\mathbb{S}}O\Pi_{\mathbb{S}} = O\Pi_{\mathbb{S}}$
and thus
\begin{align*}
	\Tr(O\rho_{\textrm{sym}}) = \frac{\Tr(O\Pi_{\mathbb{S}}\rho)}{\Tr(\Pi_{\mathbb{S}} \rho)} =  \frac{\sum_{S \in \mathbb{S}}\Tr(SO\rho)}{\sum_{S \in \mathbb{S}}\Tr(S \rho)}.
\end{align*}
This way the effective observables we need to estimate from shadows
are $SO$ and $S$ for all $S \in \mathbb{S}$ which have a reduced
weight $s{+}q$ compared to the previous $2s{+}q$.


\section{Applications\label{sec:app}}

In this section we showcase how our approach can effectively extend the reach
of noisy quantum computers and explore its practical applications.
	Recall that fully fault-tolerant quantum computers will enable executions of (in principle)
	arbitrarily
	deep circuits thus allowing users to extract expected values via, e.g.,
	amplitude estimation, whose time complexity is superior $O(M/\epsilon)$ but is proportional to the circuit depth.
	In contrast, we focus on application areas where these coherent techniques are prohibitive
	due to circuit depth limitations, e.g., 
	due to non-negligible logical error rates expected in early fault-tolerant devices.
	The advantage of classical shadows is that they only require an increase of circuit depth that is
	independent of the state preparation circuit -- and this 
	increase is negligible for Pauli shadows. Since
	the present approach has a sample complexity $\mathcal{O}( \log(M) / \epsilon^2 )$ it is particularly well suited for applications
	where the aim is to extract a large number $M$ of properties.

For instance, noisy quantum computers in either the late NISQ era or in the early fault-tolerance era
will enable us to simulate the time evolution of quantum states or to prepare ground or eigenstates~\cite{cerezoVariationalQuantumAlgorithms2021a, endoHybridQuantumClassicalAlgorithms2021, bharti2021noisy, covar,shadow_spec_2023,koczor2020quantumAnalytic, PhysRevA.106.062416}.
Our approach can then be used to accurately and efficiently extract a large number of properties of these states
provided that the noise rates are reasonable, i.e., the overhead $\lVert g \rVert_1$ is moderate.
In these application areas a fixed precision, such as chemical accuracy $\epsilon \approx 10^{-3}$,
	is often sought~\cite{mcardle2020quantum,cerezoVariationalQuantumAlgorithms2021a, endoHybridQuantumClassicalAlgorithms2021, bharti2021noisy}.

\subsection{Ground-state preparation \label{sec:eigenstate_prep}}
We first consider a spin-ring Hamiltonian as
\begin{equation}\label{eq:spin_ring_hamiltonian}
    \mathcal{H} = \sum_{k \in \text{ring}(N)} \omega_k Z_k + J \Vec{\sigma}_k\cdot\Vec{\sigma}_{k+1},
\end{equation}
with coupling $J = 0.3$, on-site interaction strengths uniformly randomly
generated in the range $-1 \leq \omega_k \leq 1 $ and $\vec{\sigma}_k = (\sigma^x_k, \sigma^y_k, \sigma^z_k)^T$
is a vector of single-qubit Pauli matrices.
This spin problem is relevant in condensed-matter physics in understanding many-body localisation \cite{nandkishore_2015}
but is challenging to simulate classically for large $N$~\cite{childs_2018, luitz_2015}. 
A broad range of techniques are available in the literature for finding eigenstates of such quantum Hamiltonians using near-term
or early fault-tolerant quantum computers~\cite{cerezoVariationalQuantumAlgorithms2021a, endoHybridQuantumClassicalAlgorithms2021, bharti2021noisy, covar}.
Here we prepare the ground state of this model using a variational Hamiltonian
ansatz in \cref{fig:error_gs_estim} (left) of $l= 5$ layers on $12$ qubits
and, as we detail in  \cref{app:numerics}, we assume
a biased Pauli noise model that can be learned efficiently using techniques from~\cite{berg2022probabilistic}
while assuming a readout error model from \cref{sec:readout}.

\noindent\textbf{Ground state energy with PEC:}
\cref{fig:error_gs_estim} (middle) shows the error in the ground state energy estimated using conventional shadows (dashed blue and dashed red)
and PEC shadows (solid blue and solid red) for an increasing number of shots $\nshot$. 

Our ansatz  circuit would ideally prepare the ground state $\rhoid$ but due to gate noise
we actually prepare the noisy state $\rho$. Thus, conventional shadows (dashed blue, dashed red) converge to 
a plateau corresponding to the biased energy $\tr({\rho\mathcal{H}})$ (solid grey).
This bias is significantly increased as we increase the circuit error rate from $\xi \approx  0.15$ to $\xi \approx 0.26$ (dashed blue vs. dashed red),
which is the expected number $\xi = \sum_k p_k$ of errors in the full circuit as explained in \cref{sec:pec}.
In contrast, PEC shadows that include measurement-error mitigation (solid blue and solid red) 
converge to the true energy $\tr({\rhoid\mathcal{H}})$ in standard shot-noise scaling
$O(1/\sqrt{\nshot})$.

\noindent\textbf{Local properties with PEC:}
Besides Hamiltonian energy estimation, which is one of the typical subroutines in quantum computing,
there is also significant value in simultaneously determining many local observables' expectation values.
For example, the rich information from classical shadows can be used to significantly
improve parameter training or to directly estimate Hamiltonian energy gaps
through the use of efficient classical post-processing~\cite{covar, shadow_spec_2023}.
In \cref{fig:error_gs_estim}(right), we plot errors when simultaneously estimating all 3-local Pauli operators
for an increasing number of shots $\nshot$.
\cref{fig:error_gs_estim}(red) shows that the errors in PEC shadows are always significantly
below the theoretical bounds (black line) from \cref{theo:m_properties} confirming looseness of the bounds
(assuming success probability $\delta = 10^{-3}$, and $M = 3^3{12 \choose 3}= 5940$).
\cref{fig:error_gs_estim}(blue) shows the errors in conventional shadows are below their bounds (with $\lVert g\rVert_1=1$)
only for a small number of shots but then asymptotically reach a plateau due to circuit noise.

\begin{figure}[tb]
	\centering
	\includegraphics[width=0.48\textwidth]{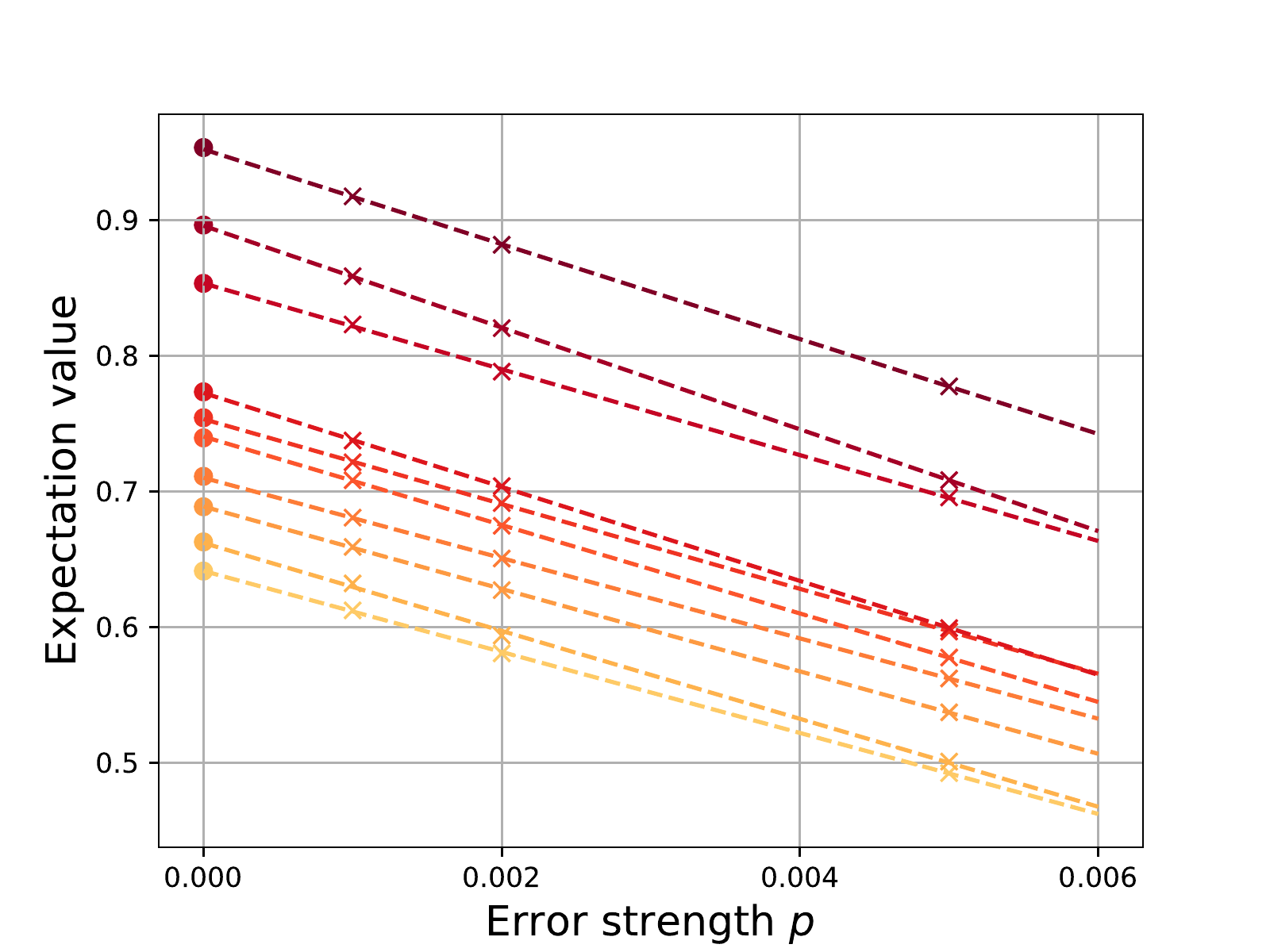}
	\caption{Simultaneously estimating all 3-local Pauli operators using error extrapolated shadows.
		We estimated noisy expected values (crosses) from shadows of size $N_s = 10^7$ by increasing the native depolarising error rate
		$p= 10^{-3}$ to higher levels $\left\{2\times10^{-3}, 5\times10^{-3}\right\}$
		by randomly sampling noisy circuit variants.
		Using a linear model function we then extrapolate to zero noise to obtain an error mitigated expectation value close to the ideal ones (disks).
	}
	\label{fig:extrapol_shadows}
\end{figure}

\begin{figure*}
	\centering
	\includegraphics[width=\textwidth]{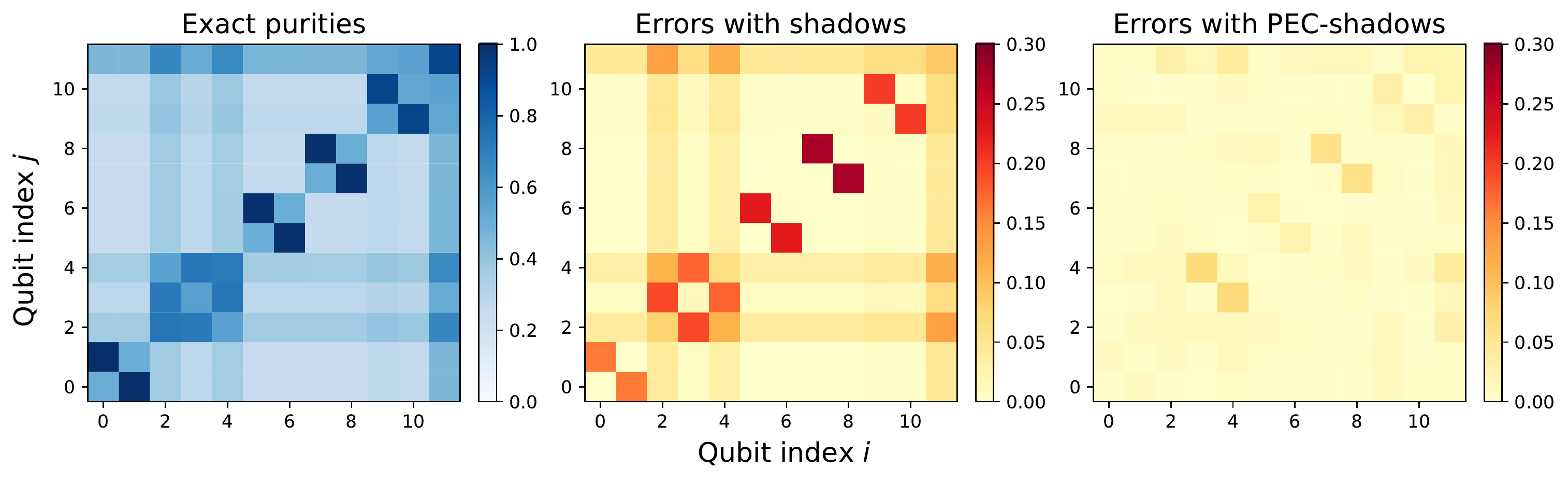}\hfill
	\caption{A noisy variational Hamiltonian ansatz is used to prepare the ground state of \cref{eq:heinseberg_chain_hamil} whose ideal, noise-free Rényi entropies $R_{Q}$ we can learn with PEC shadows. We plot purities $\tr(\rho_{Q}^2)$ as a proxy for $R_Q := -\log \tr\left(\rho_{Q}^2\right)$. (left) Purity heat map in the noiseless case and infinite shot limit.
    An increasing value indicates that the subsystem $Q$ is less entangled with the remaining qubits.
    (middle-right) Absolute error in the purities due to gate noise for a circuit error rate $\xi = 0.6$
    and due to finite repetition using $N_s = 10^5$.
    (middle) Although the entanglement pattern is approximately recovered with conventional shadows,
    in some instances we observe substantial errors, i.e., the largest error is $0.27$.
    (right) Absolute errors with PEC shadows are significantly smaller, i.e., the largest error is
    $ 7\times 10^{-2}$ but this figure could be further reduced by increasing $N_s$.}
	\label{fig:entropy}
\end{figure*}

\noindent\textbf{Local properties with extrapolation:}
We now consider the same task of simultaneously estimating expectation values of Pauli operators but we use error extrapolation.
Here we start by generating  shadows $S(\rho_{p_1}, N_s), ..., S(\rho_{p_n}, N_s)$ at different noise strengths
that we use to compute the noisy Pauli expectation values.
\cref{fig:extrapol_shadows} shows 10 examples of expected values (crosses) as a function of noise strength
and the respective linear models we fit (dashed lines). The intercept of the fitted model (dashed lines)
is our estimate of the exact expected value (disks) and is indeed reasonably close in the example.
While ZNE has been very effective and typically has a lower measurement overhead then PEC, it is generally
biased.

\subsection{Error mitigated estimation of entanglement entropies\label{sec:renyi}}
Finally, we consider an application for which classical shadows are a primary enabler
but for which error mitigation techniques have been less explored~\cite{cai2022quantum}.
As opposed to studying entanglement properties or verifying the presence thereof
in mixed quantum states~\cite{estimate_measure_guehne_2007,gme_20_blatt_2018,bounding_entanglement_measures_guehne_2019,machine_learning_quantum_many_body_preskill_2022}, here
our primary goal is to extend the reach of noisy quantum computers:
we aim to study entanglement properties of ideally pure states
which are prepared by quantum algorithms, such as phase estimation or VQE.
For example, near-term quantum computers will enable us to prepare eigenstates~\cite{cerezoVariationalQuantumAlgorithms2021a, endoHybridQuantumClassicalAlgorithms2021, bharti2021noisy, covar}
of quantum Hamiltonians and error mitigated entanglement measures can be used for, e.g.,
characterising phase transitions.
Similarly, one could simulate the
time evolution of a collision of two molecules with an early fault-tolerant quantum computer
and investigate how entanglement builds up across the individual subsystems. Furthermore, efficiently characterising
many local correlations in a state can be used to train DFT models for accurate classical simulations~\cite{baker2020density}.

We consider the Heisenberg chain
\begin{equation}\label{eq:heinseberg_chain_hamil}
    \mathcal{H} = \sum_{k} J_k \Vec{\sigma}_k\cdot\Vec{\sigma}_{k+1},
\end{equation}
with uniformly random  $-1 \leq J_k \leq 1$
and prepare its ground state with a variational Hamiltonian ansatz of $l = 8$ layers on $12$ qubits.
This system was used in ref.~\cite{shadow_huang_2020} to illustrate the power of classical shadows
in predicting entanglement entropies. However, the ground state was approximated by a set of noise-free
singlet states~\cite{dasgupta_1980, ma_1979} whereas we assume a noisy quantum computer is used for state preparation.

We use PEC shadows to extract purities $\tr(\rho_{Q}^2)$ for all single and two-qubit subsystems $Q$;
These purities then define Rényi entropies as $R_{Q} := -\log\tr(\rho_{Q}^2)$. 
In \cref{fig:entropy}, we plot the exact purities in the noiseless case -- 
disjoint blocks involving two qubits confirm that the ground
state could be approximated by a tensor product of noise-free singlet states. 

\cref{fig:entropy} (middle) shows the errors in estimating local purities using shadows of size
$N_s = 10^5$ for a circuit error rate $\xi = 0.6$. Even for this moderate error rate
conventional shadows are significantly impacted by imperfections and result in errors as large
as $0.27$ -- whereas for an increasing noise rate all purities converge to a constant value of $1/d$
where $d$ is the subsystem dimension.
In contrast, PEC shadows drastically improve the accuracy in \cref{fig:entropy} (right)
and the largest error is approximately $7\times10^{-2}$ at a number of samples $\nshot = 10^5$. 

\subsection{Further applications}
The techniques presented in this work enable us to approximate an unbiased estimator
of an ideal noise free state $\rhoid$
which can be enabling for a broad range of further practical applications  
that we defer to follow up works. For example, ref.~\cite{shadow_huang_2020}
proposed that classical shadows with randomised Clifford measurements can be used
to predict fidelities, such as the fidelity of $\rho$ with respect to a known
state $\psi$. One can imagine applications where the fidelity $\langle \psi |\rho| \psi \rangle$
is not a relevant indicator due to the impact of noise on $\rho$ and one rather aims to
predict $\langle \psi |\rhoid| \psi \rangle$, e.g., to quantify how well a variational
quantum circuit or phase estimation can prepare a known ground state thereby verifying
a circuit structure under the presence of gate noise.

Furthermore, the quantum Fisher information (QFI), which is a key quantity in quantum metrology,
can be bounded and approximated using classical shadows via techniques of Ref.~\cite{qfisher_bound_rath_2021}.
Indeed, in certain applications the relevant quantity might not be the QFI of the noisy state $\rho$
but rather the QFI of the noise-free state $\rhoid$ which can be approximated with our techniques~\cite{koczor2020variational}.

We also finally note possible ``reverse'' applications where classical shadows
can be used to improve error mitigation techniques; an obvious one is perhaps the use of
shadow tomography in explicitly reconstructing the noisy gate channels from \cref{sec:pec} -- typically one assumes
the gate channels are local and thus the approach is efficient.
In contrast, in learning-based error mitigation one does not reconstruct the noisy gate channels
but rather aims to directly learn the quasiprobabilities $g_{\underline{k}}$ by
running classically simulable circuits on a noisy quantum computer and comparing observable
expectation values to classically simulated
ones~\cite{strikis2021learning,PhysRevResearch.3.033098,montanaro2021error, cai2022quantum}.
While one can exclusively train a model for one specific observable with conventional measurement schemes, classical shadows allow for simultaneously estimating a large number of expected values.
We can thereby efficiently train error models that mitigate the impact of errors
in all local operator measurements.
The approach might similarly be useful in measuring many Pauli operators
in case of learning sparse Pauli models~\cite{berg2022probabilistic}

\section{Discussion and Conclusion\label{sec:concl}}
In this work we consider the powerful classical shadows methodology which
allow us to obtain an efficient classical representation of a quantum state $\rho$
and thus to simultaneously predict many of its properties in classical post-processing.
A major difficulty concerning near-term and early fault-tolerant quantum computers 
is that they can only prepare noisy quantum states $\rho$ from
which we would estimate corrupted properties; This challenge
motivated the field to develop quantum error mitigation techniques
that allow us to estimate expected values $\tr[O \rhoid]$  of observables $O$
in an ideal noise-free state $\rhoid$ but with having access only to noisy expected values.

We consider a range of typical quantum error mitigation techniques and
generalise them from single expected-value measurements
to the case of mitigating errors in classical shadows.
We find that Probabilistic Error Cancellation
is the most well-suited candidate which motivates
us to develop a thorough theory of PEC shadows.
In the conventional PEC approach one learns error characteristics of
the device and counters them by a probabilistic implementation of the inverse noise channel --
thus the only source of noise is
due to a possibly imperfect knowledge of
gate-error characteristics and due to finite circuit repetition.
Under the assumption that the error model of the quantum device has been appropriately
learned such that a quasiprobability representation is known, we prove that PEC shadows
are an unbiased estimator of the ideal state $\rhoid$.
We additionally prove the following rigorous performance guarantees.

First, we prove  bounds on the number of samples required to simultaneously predict
many linear properties of the ideal quantum state $\rhoid$. 
Second, the fact that we use noisy quantum circuits to predict ideal properties manifests
in a multiplicative measurement overhead -- this overhead is identical to the cost
of the conventional PEC approach and grows exponentially with the number of noisy gates.
Third, we prove rigorous sample complexities for predciting non-linear properties of the ideal
states. 

We note that our results are completely general and apply to any shadow ensemble $E$
via \cref{eq:idealised_measurements} and to any linear or non-linear property of the quantum state.
Furthermore, we provide practical post-processing protocols for the pivotal scenario of randomised
measurements in Pauli bases.
Finally, we demonstrate in numerical simulations the usefulness of PEC shadows and error extrapolated shadows,
and conclude that these techniques may be instrumental in practical applications of near-term and early fault-tolerant machines.

We note that previous works have already explored applying error mitigation techniques to
classical shadows focusing on errors in the POVMs $E$~\cite{koh_classical_2022, chen_2021} assuming the aim is to estimate
shadows of an input state $\rho$. Furthermore, classical shadows of $\rho$ have been 
used to classically estimate expected values $\tr[O \rho^n]/\tr[\rho^n]$ thereby classically performing
Error Suppression by Derangements (ESD)~\cite{PhysRevX.11.031057}, Virtual Distillation (VD)~\cite{huggins2020virtual}.
This ultimately allows us to estimate expected values in the dominant eigenvector of $\rho$
which is an approximation~\cite{koczor2021dominant} to $\rhoid$ but at an exponential 
cost in the number of qubits and in the number of noisy gates. In contrast, the techniques we present are efficient in the
sense that the sample complexity does not directly depend on the number of qubits
but rather depends exponentially on the number of noisy gates, here via the norm $\lVert g \rVert_1^2$.
As with usual error mitigation techniques, the approach is limited to a number of gates
$\nu \in \mathcal{O}(p^{-1})$ given by the inverse of the per-gate error rate --
beyond this threshold the effect of errors escalates exponentially~\cite{quek_2023, takagi_fundamental_2022,takagi2022universal}.
State-of-the-art theoretical lower bounds suggest a slightly more optimistic picture
whereby circuits of $\mathrm{poly} \log \log N$ depth provably yield exponential decrease of fidelity
as opposed to constant depth suggested by $\lVert g \rVert_1^2$.

In summary, the present work leverages an existing, rich toolbox of quantum error mitigation ideas
and generalises powerful classical shadows to the pivotal scenario of approximating properties
of an ideal quantum state $\rhoid$. As we demonstrate in a broad range of
examples, these quantum error mitigated
classical shadows are very intuitive, easy to use in practice and may play a central role in exploiting 
near-term and early fault-tolerant quantum computers.
We discuss a broad range of further possible use cases
and anticipate the present work will stimulate further advancements in the field.

\section*{Acknowledgments}
The authors thank Simon Benjamin, Dan Browne, Otfried Gühne, Ryuji Takagi, and Zoltán Zimborás for helpful comments
on drafts of the manuscript.
B.K. thanks the University of Oxford for
a Glasstone Research Fellowship and Lady Margaret Hall, Oxford for a Research Fellowship.
J.S. and H.C.N are supported by the Deutsche Forschungsgemeinschaft (DFG, German Research Foundation, project numbers 447948357 and 440958198), the Sino-German Center for Research Promotion (Project M-0294), the ERC (Consolidator Grant 683107/TempoQ), and the German Ministry of Education and Research (Project QuKuK, BMBF Grant No. 16KIS1618K).
J.S. also acknowledges the support from the House of Young Talents of the University of Siegen.
Z.C. is supported by the Junior Research Fellowship from St John’s College, Oxford.
The numerical modelling involved in this study made
use of the Quantum Exact Simulation Toolkit (QuEST), and the recent development
QuESTlink~\cite{QuESTlink} which permits the user to use Mathematica as the
integrated front end, and pyQuEST~\cite{pyquest} which allows access to QuEST from Python.
We are grateful to those who have contributed
to all of these valuable tools. 
The authors would like to acknowledge the use of the University of Oxford Advanced Research Computing (ARC)
facility~\cite{oxford_arc} in carrying out this work
and specifically the facilities made available from the EPSRC QCS Hub grant (agreement No. EP/T001062/1).
The authors also acknowledge funding from the
EPSRC projects Robust and Reliable Quantum Computing (RoaRQ, EP/W032635/1)
and Software Enabling Early Quantum Advantage (SEEQA, EP/Y004655/1).

\appendix

\section{Details of PEC Shadows}

\subsection{Unbiased estimators (\texorpdfstring{\cref{lemma_PEC_unbiased} and \cref{theo:mitigated_shadow}}))}
\label{app:unbiasedness}

\begin{proof}[Proof of \cref{lemma_PEC_unbiased}]
	The statement directly follows from the fact that $\vert \vert g \vert \vert_{1} \mathrm{sign}(g_{\underline{k}}) \mathcal{G}_{\underline{k}}$
	is an unbiased estimator for the ideal operation $\mathcal{U}_{\text{circ}}$.
	In particular, as we sample $\underline{k}$ according to the probability distribution
	$p(\underline{k})$, we obtain the expectation as
	\begin{align*}
		\underset{\underline{k}}{\mathbb{E}} \,[\hat{\rho}]
		& =  \underset{\underline{k}}{\mathbb{E}} [ \lVert g\lVert_1 \, \mathrm{sign} (g_{\underline{k}}) \mathcal{G}_{\underline{k}} (\vert 0 \rangle \langle 0 \vert )] \\
		& =  \sum_{\underline{k} } p(\underline{k}) \lVert g\lVert_1 \, \mathrm{sign} (g_{\underline{k}}) \mathcal{G}_{\underline{k}} (\vert 0 \rangle \langle 0 \vert ). \\
	\end{align*}
The above expression can be simplified collecting the constant factors as
$p(\underline{k}) \vert \vert g \vert \vert_{1} \mathrm{sign}(g_{\underline{k}}) = \mathrm{sign}(g_{\underline{k}}) \vert g_{\underline{k}} \vert =  g_{\underline{k}}$ and thus we obtain
the quasiprobability decomposition
\begin{equation*}
	\underset{\underline{k}}{\mathbb{E}} \,[\hat{\rho}]
	 = \sum_{\underline{k} } g_{\underline{k}} \mathcal{G}_{\underline{k}}
	( \vert 0 \rangle \langle 0 \vert)
	 = \mathcal{U}_{circ} (\vert 0 \rangle \langle 0 \vert) = \rho_{id}. 
\end{equation*}
\end{proof}

\begin{proof}[Proof of \cref{theo:mitigated_shadow}]
Using the abbreviation $\hat{\rho}_{id} \equiv \hat{\rho}_{\underline{k} ,l}$
we calculate the expected value as
\begin{equation}\label{eq:app_expectation_ideal_state}
    \underset{\underline{k},l}{\mathbb{E}} [\hat{\rho}_{id}]
    =
    \sum_{\underline{k} ,l} p_{\underline{k}} \, q_l  \, \hat{\rho}_{\underline{k} ,l},
\end{equation}
where $p_{\underline{k}} = \vert  g_{\underline{k}} \vert  / \lVert g \rVert_1$ is the probability of choosing the circuit variant
$\mathcal{G}_{\underline{k}}$ from \cref{def:decompose}
and we also use the probability $q_l = \tr[\mathcal{G}_{\underline{k}} (\vert 0 \rangle \langle 0 \vert) E_{l}]$ 
of observing the POVM outcome $l$.
We obtain the expected value by substituting these in Eq.~\eqref{eq:app_expectation_ideal_state} as
	\begin{align*}
			\underset{\underline{k},l}{\mathbb{E}} [\hat{\rho}_{id}] &= 
			\sum_{\underline{k} ,l} \frac{\vert g_{\underline{k}} \vert}{  \lVert g \rVert_1 } \, 
			\tr[\mathcal{G}_{\underline{k}} (\vert 0 \rangle \langle 0 \vert) E_{l}] \, \lVert g \rVert_1\, \sign (g_{\underline{k}}) C_{E}^{-1} (E_{l}).
	\end{align*}
Here we can collect and simplify all constant factors as $\lVert g \rVert_1 \tfrac{\vert g_{\underline{k}} \vert}{  \lVert g \rVert_1 }  \sign (g_{\underline{k}}) = g_{\underline{k}}$
and simplify the expected value as
\begin{align}
	\underset{\underline{k},l}{\mathbb{E}} [\hat{\rho}_{id}] &= 
	 \sum_{l} \tr\Big[\big(\sum_{\underline{k}} g_{\underline{k}} \mathcal{G}_{\underline{k}}\big)
	\big(\vert 0 \rangle \langle 0 \vert\big) E_{l}\Big] C_{E}^{-1} (E_{l}) \\
	&= \sum_{l} \tr[\rho_{id} E_{l}] C_{E}^{-1} (E_{l}) \\
	&= C_{E}^{-1} \sum_{l} \tr[\rho_{id} E_{l}]  E_{l} \\
	&= (C_{E}^{-1} C_{E}) (\rho_{id}) = \rho_{id}.
\end{align}
Above in the first equality we simply used the linearity of the trace operation
while in the second equality we used that by definition
$\sum_{\underline{k}} g_{\underline{k}} \mathcal{G}_{\underline{k}} \vert 0 \rangle \langle 0 \vert = \rhoid$.
We finally substituted the definition of
the measurement channel $C_{E} (\cdot)$ given by Eq.~\eqref{eq:measurement_channel}.
\end{proof}

\subsection{Shadow norms \label{app:shadow_norm}}

Before proving \cref{lemma:expval_variance}, let us define the
shadow norm of an operator $O$ and calculate it explicitly for
the practically important scenario when $O$ is a local Pauli string.
\begin{lemma}\label{lemma:shadow_norm}
	We define the shadow norm with
	respect to the generalized measurement $E$	as
	\begin{align}
		\lVert O \rVert_{E}^{2} := \lVert \sum_{l=1}^{\nes} \tr \, [\hat{\rho}_{l} O]^{2} E_{l} \rVert_\infty,
	\end{align}
	where $\lVert \cdot \rVert_\infty$ denotes the maximal eigenvalue of the corresponding operator
	and $\hat{\rho}_{l} = C_{E}^{-1}(E_{l})$. For the specific case of
	Pauli-basis measurements and observables that are $q$-local Pauli strings, the
	squared shadow norm is given as $3^{q}$. 
\end{lemma}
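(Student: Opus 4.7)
The plan is to exploit the tensor-product structure of the Pauli-basis snapshot from \cref{eq:snapshot} to evaluate $\tr[\hat{\rho}_l O]$ explicitly, square it, and then carry out the sum over $l=(j,b)$ by using $\sum_b |b\rangle\langle b|=\openone$ together with a counting argument over the random bases. Since the result should be proportional to the identity, the spectral norm will reduce to a simple constant.

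First, I fix a $q$-local Pauli observable $O=\bigotimes_{i\in S}P_i$ with $|S|=q$, and substitute the factorised snapshot $\hat{\rho}_l=\bigotimes_{i=1}^N [3(Q_j^{(i)})^{\dagger}|b^{(i)}\rangle\langle b^{(i)}|Q_j^{(i)}-\openone]$ into $\tr[\hat{\rho}_l O]$. Because $P_i$ is traceless for $i\in S$ and $O$ acts as identity outside $S$, each factor evaluates to either $3\langle b^{(i)}|Q_j^{(i)}P_i(Q_j^{(i)})^{\dagger}|b^{(i)}\rangle$ for $i\in S$ or to $\tr[3|b^{(i)}\rangle\langle b^{(i)}|-\openone]=1$ for $i\notin S$. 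The single-qubit factor on $S$ vanishes unless the randomly chosen basis $Q_j^{(i)}$ coincides with the eigenbasis of $P_i$, in which case it equals $\pm 3$. Therefore $\tr[\hat{\rho}_l O]^2\in\{9^q,0\}$, equal to $9^q$ exactly when all $q$ of the bases on $S$ match $O$, and zero otherwise.

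Second, I plug this into the shadow-norm definition. Using $E_l=p_j Q_j^{\dagger}|b\rangle\langle b|Q_j$ with $p_j=1/3^N$, and fixing a basis choice $j$ for which the measurement bases match $O$ on $S$, the sum over $b$ collapses via completeness to $\sum_b Q_j^{\dagger}|b\rangle\langle b|Q_j=\openone$. The only remaining sum is over the $3^{N-q}$ choices of $j$ that match on $S$ (the bases off $S$ are free), yielding
\begin{equation*}
\sum_{l=1}^{\nes}\tr[\hat{\rho}_l O]^2 E_l=9^q\cdot 3^{N-q}\cdot\frac{1}{3^N}\openone=3^q\openone.
\end{equation*}
Taking the spectral norm gives $\lVert O\rVert_E^2=3^q$, as claimed.

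The main technical care is in correctly separating the support $S$ of $O$ from its complement when evaluating the single-qubit factors, in particular noting that the identity-trace factor contributes $1$ (not $3$) off-support, and that the matching-basis condition only applies on-support. Beyond that, the computation is a direct application of independence of the single-qubit randomisations in the Pauli-basis ensemble; the more general case of arbitrary ensembles $\mathcal{Q}$ is not needed here since the statement is restricted to Pauli measurements and a closed-form bound.
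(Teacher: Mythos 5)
Your proof is correct, and it reaches the exact identity $\sum_{l=1}^{\nes}\tr[\hat{\rho}_l O]^2 E_l = 3^q\,\openone$, which immediately gives the claimed value of the squared shadow norm. The route differs from the paper's in organization rather than substance: the paper first invokes the tensor-product factorization of the shadow norm, $\lVert O\rVert_E^2=\prod_j\lVert O_j\rVert_{E^{(j)}}^2$, reduces everything to a single qubit, and then shows via the anticommutation identity $\pm\langle t^{\pm}|O|t^{\pm}\rangle=\tfrac12\langle t^{\pm}|\{O,T\}|t^{\pm}\rangle=\delta_{O,T}$ that the single-qubit operator collapses to $3\,\openone$ (and to $\openone$ when $O_j=\openone$), whence $3^q$. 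You instead keep the full $N$-qubit sum, observe that $\tr[\hat{\rho}_{(j,b)}O]^2\in\{9^q,0\}$ with the nonzero value attained exactly when the bases match $O$ on its support, and then use completeness $\sum_b Q_j^\dagger|b\rangle\langle b|Q_j=\openone$ together with the count $3^{N-q}$ of matching basis choices at weight $p_j=3^{-N}$. What your version buys is that the per-qubit factorization of the norm never needs to be stated or justified as a separate fact, and the $b$-independence of the squared trace makes the completeness step transparent; what the paper's version buys is a single-qubit computation that generalizes directly to other local POVMs (e.g.\ the biased octahedron measurement discussed at the end of \cref{app:shadow_norm}), where your global counting with uniform $p_j$ would have to be redone. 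Your one essential observation off-support --- that the identity factor contributes $1$, not $3$ --- is handled correctly, so there is no gap.
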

\begin{proof}
When formulating shadow tomography with generalized measurements,
the case of uniformly sampled Pauli-basis measurements corresponds to the so
called octahedorn POVM~\cite{povm_shadow_nguyen_2022}, where the effects on a single qubit are given by
\begin{equation*}
    E_{j} =\tfrac{1}{3} Q^\dagger_j | b \rangle \langle b | Q_j,
\end{equation*}
where $b \in \{0,1\}$ is a single bit and $Q_j$ is one of the three basis transformation unitaries that allow us to
measure in the bases of the Pauli $X$, $Y$ and $Z$ operators.

Thus the effect is equivalent to $ \tfrac{1}{3} \vert t^{\pm} \rangle \langle t^{\pm} \vert$ for $t \in \lbrace x,y,z \rbrace$
where  $\vert t^{\pm} \rangle$ denotes the eigenvector corresponding to eigenvalue $\pm1$ of the single-qubit
Pauli-$t$ operator. It follows from the symmetry of the measurement~\cite{povm_shadow_nguyen_2022}
that the shadows can be computed directly from the effects as
\begin{equation}\label{eq:snapshot_analytically}
    \hat{\rho}_{l} = 9E_{l} - \openone.
\end{equation}

For the case of a system consisting of $n$ qubits where one aims to estimate local observables of the form $O = O_{1} \otimes \cdots \otimes O_{n}$ and the measurement is given by the tensor product of local measurements $E_{j_{1}}^{(1)} \otimes \cdots \otimes E_{j_{n}}^{(n)}$ with $E^{(j)}$ denotes the POVM acting on the $j$th qubit, the shadow norm similarly factorizes as 
$\lVert O \rVert_{E}^{2} = \prod_{j} \, \lVert O_{j} \rVert_{E^{(j)}}^{2}$.

We now consider the case when the single-qubit operator $O_{j}$ acting on the $j$th qubit is a Pauli operator $X$, $Y$ or $Z$
and thus $\tr[ O_j]=0$. By the previous discussion, it is sufficient to only consider a single qubit, thus we will suppress the index $j$. This yields the shadow norm 
\begin{align}
   \lVert O \rVert_{E}^{2}
    & = \lVert  \sum_{l=1}^{6} \text{Tr}[\hat{\rho}_{l} O]^{2} E_{l} \rVert_{\infty} \\
    &= \lVert \sum_{t^{\pm}} \frac{1}{3} \text{Tr}[(3 \vert t^{\pm} \rangle \langle t^{\pm} \vert) O]^{2} \vert t^{\pm} \rangle \langle t^{\pm} \vert \, \rVert_{\infty}\\
    &= 3 \lVert \sum_{t^{\pm}} \langle t^{\pm} \vert O \vert t^{\pm} \rangle^{2} \vert t^{\pm} \rangle \langle t^{\pm} \vert \, \rVert_{\infty} \label{eq:proof_pauli_norm}
\end{align}
Now observe that if $O,T\in \lbrace X,Y,Z \rbrace$ with $\vert t^{\pm} \rangle$ are the
normalized eigenvectors of $T$ to eigenvalues $\pm1$,  we have due to the anticommutation
relation  $ \delta_{O,T}  = \frac{1}{2} \langle t^{\pm} \vert \lbrace O,T \rbrace \vert t^{\pm} \rangle
= \pm  \langle t^{\pm} \vert O \vert t^{\pm} \rangle$.
This implies that the sum in Eq.~\eqref{eq:proof_pauli_norm} collapses to the
identity $\openone$. Hence we obtain $\lVert O \rVert_{E}^{2} = 3$.
When the single-qubit observable is the identity $O= \openone$ we obtain a shadow
norm $\lVert O \rVert_{E}^{2} = 1$. Consequently, for $q$-local Pauli strings acting on $n$ qubits the squared shadow norm is $\lVert O \rVert_{E}^{2} = 3^q$, thus independent of $n$.
We explain in \cref{app:numerics} how this bound is modified when considering the effect of readout errors.
\end{proof}

In practice it is often the case that the set of targeted observables posses a certain structure.
If this is the case, small variations to classical shadow protocol in which the measurement basis
is sampled uniformly at random can yield a substantial improvement with respect to sample complexity~\cite{locally_biased_shadows_hadfield_2022}.

For instance, in electronic structure problems 
where one aims to, e.g., determine the ground-state of molecules using a quantum algorithm,
one typically starts by transforming the molecular Hamiltonian into a qubit Hamiltonian
as a sum of Pauli observables by means of an appropriate \textit{mapping}.
Common types of such mappings are Jordan-Wigner (JW), Bravyi-Kitaev (BK) and the parity (P)
transformation~\cite{mcardle2020quantum}. Here it is important to note that depending on the encoding, the
different Pauli operators $X,Y,Z$ appear with different frequencies in the corresponding
qubit observable. For instance in BK encoding, the appearance of Pauli-$Y$ operators is
suppressed compared to $X$ and $Z$. Consequently, measuring the different Pauli
basis uniformly on each qubit, i.e., using the octahedron measurement, would be very wasteful.

A similar statement concerning sample complexity as in Lemma~\ref{lemma:shadow_norm} can
be made for the case of locally biased shadows~\cite{locally_biased_shadows_hadfield_2022, huang_derandomization_2021}.
Let assume that the bias is $p_{x},p_{y},p_{z}$ where $p_{t}$ is the probability for performing the measurement in Pauli $t$ basis. The corresponding POVM would be $E_{t^{\pm}} = p_{t} \vert t^{\pm} \rangle \langle t^{\pm} \vert $. 
Then the classical shadow based on measurement outcome would be 
\begin{align}
    \hat{\rho}_{t^{\pm}} = p_{t}^{-2} E_{t^{\pm}} - \frac{\mu - p_{t}^{2}}{2p_{t} \mu} \openone,
\end{align}
where $\mu = p_{x}^{2} + p_{y}^{2} + p_{z}^{2}$. With this, given a Pauli string, one can directly calculate the shadow norm. 

\section{Proofs of performance guarantees}
\subsection{Variance of linear properties
(\texorpdfstring{\cref{lemma:expval_variance}}))
	}
\label{app:variance}
\begin{proof}[Proof of \cref{lemma:expval_variance}]
	Note that $ \text{Var}[\hat{o}] = \mathbb{E}[(\hat{o} - \mathbb{E}[\hat{o}])^{2}] $.
	As $\hat{\rho}_{id}$ is unbiased, we have $\mathbb{E}[\tr(O \hat{\rho}_{id})]^{2} = \langle O \rangle^{2}$ and thus $\text{Var}[\hat{o}] = \mathbb{E}[\tr(O \hat{\rho}_{id})]^{2} - \langle O \rangle^{2} \geq \mathbb{E}[\tr(O \hat{\rho}_{id})]^{2} $.
	Hence it remains to bound the term
	\begin{align*}
			\underset{\underline{k},l}{\mathbb{E}} \Big[ \tr(O \hat{\rho}_{id})^{2} \Big] 
			&=
			\underset{\underline{k},l}{\mathbb{E}}\Big[ \, 
			\text{Tr}\big[   O \lVert g \rVert_1 \sign (g_{\underline{k}}) C_{E}^{-1}(E_{l})    \big ]^2  
			\, \Big] \\
			&=
			\lVert g \rVert_1^{2} \, \underset{\underline{k},l}{\mathbb{E}} \Big[ \tr \big[   O C_{E}^{-1} (E_{l})  \big]^{2}  \Big].
	\end{align*}
We can now calculate the expectation by recalling that $p_{\underline{k}} = \vert g_{\underline{k}} \vert / \lVert g \rVert_1$
is the probability from \cref{def:decompose}
of choosing the circuit variant $\mathcal{G}_{\underline{k}}$
and $q_l = \tr[\mathcal{G}_{\underline{k}} (\vert 0 \rangle \langle 0 \vert) E_{l}]$ 
is the probability of observing the POVM outcome $l$.
Thus the above expectation is calculated as
\begin{align*}
	&\lVert g \rVert_1^{2} \sum_{\underline{k},l} \frac{\vert g_{\underline{k}} \vert }{  \lVert g \rVert_1  }  
	\times
	\tr \big[  \mathcal{G}_{\underline{k}} (\vert 0 \rangle \langle 0 \vert) E_{l}  \big] \times \tr\big[   O C_{E}^{-1}(E_{l})  \big]^{2} \\
	&=
	\lVert g \rVert_1^2 \, \sum_{l} \tr \big[  \Omega(\vert 0 \rangle \langle 0 \vert) E_{l} \big] \times \tr \big[O C_{E}^{-1} (E_{l}) \big ]^{2}.
\end{align*}
Above we introduced	$\Omega :=  \lVert g \rVert_1^{-1} \, \sum_{\underline{k}} \vert g_{\underline{k}} \vert \mathcal{G}_{\underline{k}}$
	which is actually a permissible quantum channel~\cite{nielsen_chuang_2010}, i.e.,
	a CPTP map since by definition it is a convex combination of CPTP maps $\mathcal{G}_{\underline{k}}$.
	This expression is similar to the one in Ref.~\cite{shadow_huang_2020}.
	
	The above expression can be upper bounded by replacing the initial state $\vert 0 \rangle \langle 0 \vert$
	by a maximization over all states $\sigma$. Thus we obtain the upper bound
	\begin{align} \label{eq:3}
		\begin{split}
			&\underset{\underline{k},l}{\mathbb{E}}\Big[ \tr(O \hat{\rho}_{id})^{2}  \Big] \\
			& \leq 
			\lVert g \rVert_1^2 \, \underset{\sigma}{\text{max}} \, \sum_{l} \tr \big[  \Omega(\sigma) E_{l} \big] 
			\times  \tr \big[  O C_{E}^{-1}(E_{l}) \big]^{2} \\
			& = 
			\lVert g \rVert_1^2 \, \underset{\sigma}{\text{max}} \,
			\tr  \Big[  
			\Omega(\sigma) \sum_{l} \Big( \tr\big[ O \hat{\rho}_{l} \big]^{2} E_{l} \Big)
			\Big], 
		\end{split}
	\end{align}
where we moved the summation inside the trace. By introducing the abbreviation $\Gamma = \sum_{l} \text{Tr}[O \hat{\rho}_{l}]^{2} E_{l} $
we obtain the upper bound as
\begin{align*}
	\underset{\underline{k},l}{\mathbb{E}}\Big[ \tr(O \hat{\rho}_{id})^{2}  \Big]
	&\leq \lVert g \rVert_1^{2} \, \underset{\sigma}{\text{max}} \, \tr \Big[  \Omega(\sigma) \Gamma  \Big] \\
	& \leq 
	\lVert g \rVert_1^{2} \, \underset{\sigma}{\text{max}} \, \tr\big[ \sigma \Gamma \big] \\
	&= \lVert g \rVert_1^{2} \, \lVert \Gamma \rVert_\infty \\
	&= \lVert g \rVert_1^{2} \, \Vert O \rVert_{E}^{2}.
\end{align*}
	Above we used that $\Omega(\sigma)$ is a valid density matrix 
	and thus upper bounded the trace via the operator norm $\tr\big[ \sigma \Gamma \big] \leq \lVert \Gamma \rVert_\infty$
	as the largest singular value of $\Gamma$, which is by definition the shadow norm
	from \cref{lemma:shadow_norm}. Since $\langle O \rangle^{2} \geq 0$,
	we obtain $\var[\hat{o}] \leq \lVert g \rVert_1^{2} \, \lVert O \rVert_{E}^{2} - \langle O \rangle^{2}
	\leq \lVert g \rVert_1^{2} \, \lVert O \rVert_{E}^{2}$.
\end{proof}

Let us make a few observations. (a) recall that conventional classical shadows
make no assumption about the input state $\rho$~\cite{shadow_huang_2020}. In contrast, in our case a circuit description
$\mathcal{U}_{\text{circ}} \vert 0 \rangle \langle 0 \vert$ of the ``input state'' $\rho_{id}$
is actually part of the protocol. 
Of course, knowing such a description of the input state does not allow one to
classically efficiently predict its properties without using classical shadows unless
the circuit $\mathcal{U}_{\text{circ}}$ has some special properties permitting
efficient classical simulation, such as Clifford circuits.

(b) the proof in \cref{lemma:expval_variance} involves a maximization over density matrices
such that our bounds are independent of the particular quasiprobability decomposition  
and thus depends only on the norm $\lVert g \rVert_1^{2}$.
(c) it can be expected that the upper bound in \cref{lemma:expval_variance} is very
pessimistic. Similar, constant factor looseness of the bounds was already observed  for conventional shadows~\cite{shadow_huang_2020},
however the discrepancy is strongly expected to be even larger for PEC shadows.
This is due to (b) as we do not take into account properties of the individual circuits in the quasi
probability decomposition but rather apply a pessimistic global bound.

\subsection{Sample complexity for predicting linear properties (\texorpdfstring{\cref{theo:m_properties}}))}
\label{app:proof_m_properties}
In order to predict expected values of $M$ independent observables $\lbrace O_{1},...,O_{M} \rbrace$, we group $\nshot= N_{batch} K$ independent snapshots into $K$ batches
$\mathcal{B}_{1},...,\mathcal{B}_{K}$ each of size $N_{batch}$. Then for each subset $\mathcal{B}_{i}$
one uses the empirical mean as $\hat{\mu}_{i}(O_{j}) = N_{batch}^{-1} \sum_{l \in \mathcal{B}_{i}} \text{Tr}[O_{j} (\hat{\rho}_{id})_{l}]$.
The final estimate for the expectation value of $O_{j}$ is then obtained by the median of the individual
empirical means, i.e., 
\begin{align}
	\hat{\mu}_{K,b}(O_{j}) := \text{median} \lbrace \hat{\mu}_{1}(O_{j}) ,..., \hat{\mu}_{K}(O_{j})  \rbrace.
\end{align}
Even though this method requires an increased number $N_{batch} K$ of independent classical shadows, it is much more robust against
outlier corruption. The idea is that if $\hat{\mu}_{K,b}(O_{j})$ deviates more than $\epsilon$ from
$\text{Tr}[O_{j} \rhoid]$, more than $K/2$ of the individual empirical mean values must deviate by more
than $\epsilon$. This is an exponentially suppressed event. This can be made more formal by the
concentration inequality estimator~\cite{mom_convergence_jerrum_1986,stat_learning_lerasle_2019}. 
\begin{align}
	\text{Prob}[\vert \hat{\mu}_{K,b}(O_{j}) - \langle O_{j} \rangle \vert \geq  \frac{2\sigma}{\sqrt{N_{batch}}}] \leq \exp ( -\frac{K}{8})
\end{align}
where $\sigma$ denotes the standard deviation. 
\begin{theorem}[Formal version of \cref{theo:m_properties}]\label{theo:m_prop_formal}
	Let $\mathcal{U}_{\text{circ}}$ be the ideal quantum circuit producing the ideal output state
	$\rho_{id}$ from \cref{def:decompose}.
	Suppose that we want to predict $M$ linear properties $O_{1}, \dots, O_{M}$ of the
	ideal state, i.e., $\langle O_{j} \rangle = \tr[O_{j} \rho_{id}]$.
	For fixed performance metrics $\epsilon, \delta \in [0,1]$ set
	\begin{align}
		N_{batch} = \frac{4g^{2}}{\epsilon^{2}} \, \underset{1 \leq j \leq M}{\lVert O_{j} \rVert_{E}^{2}} \quad \text{and} \quad K = 8 \log (\frac{M}{\delta})
	\end{align}
	Then a collection of $N =K N_{batch}$ independent classical shadows allow for accurately predicting
	all ideal expectation values via median of means estimation such that  
	\begin{align} \label{eq:concentration_mom}
		\mathrm{Prob}[\vert \hat{\mu}_{K,b}(O_{j}) - \langle O_{j} \rangle \vert \leq \epsilon] \geq 1 - \delta.
	\end{align}
\end{theorem}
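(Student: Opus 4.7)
The plan is to reduce the theorem to two standard ingredients: the single-observable variance bound from \cref{lemma:expval_variance}, which controls each batch mean, and the exponential concentration of the median-of-means estimator, which upgrades a batch-mean deviation into a high-probability statement for every observable simultaneously via a union bound.

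First, I would fix an observable $O_j$ and examine the per-snapshot estimator $\hat{o}_j = \tr[O_j \hat{\rho}_{id}]$ defined by a single PEC snapshot. \cref{lemma:expval_variance} already gives $\var[\hat{o}_j] \leq \lVert g \rVert_1^2 \lVert O_j \rVert_E^2$, so the empirical mean $\hat{\mu}_i(O_j)$ over a batch of $N_{batch}$ i.i.d.\ snapshots satisfies $\var[\hat{\mu}_i(O_j)] \leq \lVert g \rVert_1^2 \lVert O_j \rVert_E^2 / N_{batch}$. Substituting this variance bound into the median-of-means concentration inequality \cref{eq:concentration_mom} (stated in the appendix and taken from \cite{mom_convergence_jerrum_1986,stat_learning_lerasle_2019}) with $\sigma^2 = \lVert g \rVert_1^2 \lVert O_j \rVert_E^2$ yields
\begin{equation*}
    \mathrm{Prob}\Big[ \vert \hat{\mu}_{K,b}(O_j) - \langle O_j \rangle \vert \geq \tfrac{2 \lVert g \rVert_1 \lVert O_j \rVert_E}{\sqrt{N_{batch}}} \Big] \leq e^{-K/8}.
\end{equation*}
With the prescribed choice $N_{batch} = 4 \lVert g \rVert_1^2 \max_{k} \lVert O_k \rVert_E^2 / \epsilon^2$, the right-hand deviation threshold is at most $\epsilon$ uniformly in $j$, so each individual failure probability is bounded by $e^{-K/8}$.

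Second, I would handle the simultaneous guarantee by a union bound over the $M$ observables: the probability that \emph{any} $\hat{\mu}_{K,b}(O_j)$ deviates from $\langle O_j \rangle$ by more than $\epsilon$ is at most $M e^{-K/8}$. Setting this bound to $\delta$ gives $K = 8 \log(M/\delta)$, which is precisely the batch count in the statement, and the total sample budget $N_s = K N_{batch}$ then matches \cref{eq:error_bound} up to the stated constants.

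The only subtlety I anticipate is ensuring that the snapshots really are i.i.d.\ across batches and across the two layers of randomness (the PEC multi-index $\underline{k}$ and the POVM outcome $l$), since the concentration inequality invoked requires independence. This is handled by the protocol in \cref{theo:mitigated_shadow}: each snapshot draws $\underline{k}$ and $l$ independently from their joint distribution, and \cref{lemma:expval_variance} bounds the variance under exactly this joint expectation, so the independence hypothesis is automatically satisfied. No subtler obstacle arises, which is why the proof is essentially a direct composition of \cref{lemma:expval_variance} with the off-the-shelf median-of-means tail bound plus a union bound.
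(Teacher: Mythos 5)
Your proposal is correct and follows essentially the same route as the paper's own proof: the single-snapshot variance bound of \cref{lemma:expval_variance} feeds the median-of-means tail bound $\mathrm{Prob}[\vert \hat{\mu}_{K,b}(O_j)-\langle O_j\rangle\vert \geq 2\sigma/\sqrt{N_{batch}}]\leq e^{-K/8}$, the choice of $N_{batch}$ makes the deviation threshold at most $\epsilon$, and a union bound over the $M$ observables fixes $K=8\log(M/\delta)$. The only nitpick is a label slip: the concentration inequality you invoke is the unlabeled appendix inequality from \cite{mom_convergence_jerrum_1986,stat_learning_lerasle_2019}, not \cref{eq:concentration_mom}, which is the theorem's conclusion.
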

\begin{proof}
	This is a direct consequence of the concentration property of the median of means estimator
	together with the bound on the variance from \cref{lemma:expval_variance}. 
	Because $\text{Var}[\hat{\mu}] \leq \lVert g \rVert_1^{2} \lVert O \rVert_{E}^{2}$ and if the accuracy
	is $\epsilon$,  we have
	$N_{batch} \geq 4 \lVert g \rVert_1^{2} \lVert O \rVert_{E}^{2} \geq  4 \sigma^{2} / \epsilon^{2}$.
	Further, as we have $M$ measurements that we want to accurately predict with at most
	failure probability $\delta$, we need for each individual measurement $\exp(- K/8) \leq \delta/M$.
	Thus the choice $K = 8 \log(M/\delta)$ yields the desired bound. In total we have:
	\begin{align*}
			&\text{Prob}[\vert \hat{\mu}_{K,b}(O_{j}) - \mu_{j} \vert \geq \epsilon \, \, \forall j ]  \\ 
            &= \text{Prob}[\bigcup_{j=1}^{M} \lbrace \vert \hat{\mu}_{K,b}(O_{j}) - \mu_{j} \vert 
			\geq \epsilon  \rbrace ] \\
			&\leq \sum_{j=1}^{M} \text{Prob}[\vert \hat{\mu}_{K,b}(O_{j}) - \mu_{j} \vert \geq \epsilon ] \leq M \frac{\delta}{M} = \delta
	\end{align*}
\end{proof}

\subsection{Predicting non-linear properties (\texorpdfstring{\cref{theo:nonlin}}))}
\label{app:nonlin}

In order to obtain rigorous
performance guarantees of our estimator, two ingredients are needed. First, note that any
polynomial function in the quantum state can be written as a linear function in tensor
products of the quantum state. More precisely, suppose we want to estimate a polynomial function of degree $m$ of the quantum state $\rho$, e.g., $\Tilde{f}: B(\mathcal{H}) \rightarrow \mathbb{R}$ with $\Tilde{f}(\rho) := \Tr[\Tilde{A} \rho^{m}]$, where $\rho, \Tilde{A} \in B(\mathcal{H})$. If $C^{(m)}: B(\mathcal{H}^{\otimes m}) \rightarrow B(\mathcal{H}^{\otimes m})$ denotes the cyclic permutation operator, i.e., $C^{(m)} (\vert \phi_{1} \rangle \vert \phi_{2} \rangle \cdots  \vert \phi_{m} \rangle) = \vert \phi_{m} \rangle \vert \phi_{1} \rangle  \cdots  \vert \phi_{m-1} \rangle$ we can associate to $\Tilde{f}$ a  function $f$ 
and an operator $A \in B(\mathcal{H}^{\otimes m})$ such that
\begin{align}\label{eq:predicting_nonlinearities}
    f(\rho) = \Tr[A \rho^{\otimes m}], \quad A = \Tr_{1}[C^{(m+1)} \Tilde{A} \otimes \openone^{\otimes m}]
\end{align}
and $f(\rho) = \Tilde{f}(\rho)$.

The second tool needed is the so called U-statistics, which often provides a uniformly
minimum variance unbiased estimator for nonlinear polynomial functions. Suppose we have
access to $N$ independent snapshots $\hat{\rho}_{1},...,\hat{\rho}_{N}$ which are generated by an
underlying state $\rho$ and that
$f(\hat{\rho}_{1},...,\hat{\rho}_{m})$ is a
polynomial function in the shadows such that $\theta$, which is our parameter of
interest is given by $\theta = \mathbb{E}[f(\hat{\rho}_{1},...,\hat{\rho}_{m})]$.
The $U$-statistics~\cite{u_statistics_hoeffding_1992} of order $m$  is defined as 
\begin{align}\label{eq:U_statistic}
	U_{N} := \binom{N}{m}^{-1} \sum_{\mathcal{C}_{N,m}} f(\hat{\rho}_{i_{1}},...,\hat{\rho}_{i_{m}}),
\end{align}
where $\mathcal{C}_{N,m}$ is the set of all combinations of $m$ distinct elements
that one can build out of $N$ different snapshots. The variance of this
estimator has a closed form in dependence on the function $f$.
For a $U$-statistic $U_{N}$ given by Eq.~\eqref{eq:U_statistic} the variance obeys~\cite{u_statistics_hoeffding_1992}  
\begin{align}\label{eq:var_summation}
	\text{Var}[U_{N}] = \frac{1}{\binom{N}{m}} \sum_{d=1}^{m} \binom{m}{d}
	\binom{N-m}{m-d} \text{Var}[f^{(d)}(\hat{\rho}_{1},...,\hat{\rho}_{d})],
\end{align}
where 
\begin{align*}
    f^{(d)} (\hat{\rho}_{1},...,\hat{\rho}_{d}) := \underset{\hat{\rho}_{d+1},...,\hat{\rho}_{m}}{\mathbb{E}} [f(\hat{\rho}_{1},...,\hat{\rho}_{d},\hat{\rho}_{d+1},...,\hat{\rho}_{m})].
\end{align*}
In order to understand the scaling of $\text{Var}[U_{N}]$ it is sufficient to consider a particular instance $\text{Var}[f^{(d)}(\hat{\rho}_{1},...,\hat{\rho}_{d})]$. First notice that for $A \in \mathcal{B}(\mathcal{H}^{\otimes m})$ as defined in Eq.~\eqref{eq:predicting_nonlinearities} one has 
\begin{align}
    f^{(d)} (\hat{\rho}_{1},...,\hat{\rho}_{k}) = \text{Tr}[A \hat{\rho}_{1}\otimes \cdots \otimes \hat{\rho}_{d} \otimes \rho^{\otimes p}]
\end{align}
where $p = m-d$ with the convention that $\rho^{\otimes 0} = 1 \in \mathbb{C}$.
Further define $\hat{\rho}_{\underline{l}} = \hat{\rho}_{l_{1}} \otimes \cdots \otimes \hat{\rho}_{l_{d}} \otimes \rho^{p}$
using the abbreviation $\hat{\rho}_{l_{1}} \equiv (\hat{\rho}_{id})_{l_{1}}$. For $1 \leq d \leq m$ we have 
\begin{align*}
    &\mathbb{E} [f^{(d)}(\hat{\rho}_{1},...,\hat{\rho}_{d})^{2}] \\ 
    &= \mathbb{E}[\text{Tr}(A \hat{\rho}_{1}\otimes \cdots \otimes \hat{\rho}_{k} \otimes \rho^{\otimes p})^{2}] \\
    &= \vert \vert g \vert \vert_{1}^{2d} \underset{\underline{k}_{1},l_{1}}{\mathbb{E}} \cdots \underset{\underline{k}_{d},l_{d}}{\mathbb{E}}  \text{Tr}[A \hat{\rho}_{\underline{l}}]^{2} \\
    &= \vert \vert g \vert \vert^{2d} \sum_{\underline{k}_{1},...,\underline{k}_{d}} \sum_{\underline{l}} (\prod_{j=1}^{d} p(\underline{k}_{j}) p(l_{j} \vert \underline{k}_{j})) \text{Tr}[A \hat{\rho}_{\underline{l}}]^{2} \\
    &=\vert \vert g \vert \vert^{2d} \sum_{\underline{k}_{1},...,\underline{k}_{d}} \sum_{\underline{l}} (\prod_{j=1}^{d} 
 \frac{   \vert g_{\underline{k}_{j}} \vert  } {\vert \vert g \vert \vert})
     \text{Tr}[\mathcal{G}_{\underline{k}_{j}} (\vert 0 \rangle \langle 0 \vert) E_{l_{j}})]  \text{Tr}[A \hat{\rho}_{\underline{l}}]^{2}
\end{align*}
Similarly as in the proof of \cref{lemma:expval_variance} we denote the operator
 $\Omega :=\vert \vert g \vert \vert^{-1} \sum_{\underline{k}_{j}} \vert g_{\underline{k}_{j}} \vert \mathcal{G}_{\underline{k}_{j}}$ for all $1 \leq j \leq d$ and write $E_{\underline{l}} = E_{l_{1}} \otimes \cdots \otimes E_{l_{d}}$. Then 
\begin{align}
\begin{split}
    &\vert \vert g \vert \vert^{2d} \sum_{\underline{l}} \prod_{j=1}^{d} \text{Tr}[\Omega(\vert 0 \rangle \langle 0 \vert  ) E_{\underline{l}_{j}}] \, \text{Tr}[A \hat{\rho}_{\underline{l}}]^{2} \\
    & =\vert \vert g \vert \vert^{2d} \sum_{\underline{l}} \text{Tr}[\Omega(\vert 0 \rangle \langle 0 \vert)^{\otimes d} E_{\underline{l}}] \, \text{Tr}[A \hat{\rho}_{\underline{l}}]^{2} \\
    & \leq \vert \vert g \vert \vert^{2d} \, \underset{\sigma}{\text{max}} \, \text{Tr}\Big[\Omega (\sigma)^{\otimes d} \sum_{\underline{l}} \text{Tr}[A \hat{\rho}_{\underline{l}}]^{2} E_{\underline{l}}\Big] \\
    &\leq \vert \vert g \vert \vert^{2d} \, \underset{\sigma}{\text{max}} \, \text{Tr}[\Omega(\sigma)^{\otimes d} \Gamma] \\
   &= \vert \vert g \vert \vert^{2d} \, \vert \vert \Gamma \vert \vert_{\infty},    
    \end{split}
\end{align}
where $\Gamma = \sum_{\underline{l}} \text{Tr}[A \hat{\rho}_{\underline{l}}]^{2} E_{\underline{l}}$.

Finally, we can evaluate an upper bound of the summation in \cref{eq:var_summation} analytically as
\begin{align*}
	\text{Var}[U_{N}] 
	\leq&
	 \vert \vert g \vert \vert^{2m} \, \vert \vert \Gamma \vert \vert_{\infty}  \frac{1}{\binom{N}{m}} \sum_{d=1}^{m} \binom{m}{d}
	\binom{N-m}{m-d} \\
	=&
	\vert \vert g \vert \vert^{2m} \, \vert \vert \Gamma \vert \vert_{\infty} 
	[ 1-\frac{((N-m)!)^2}{N! (N-2 m)!} ].
\end{align*}
Given the factorial formula is of $\mathcal{O}(1/N)$ with $N \equiv \nshot$,
the above bound implies that the number of samples needed to predict polynomial functions
of degree $m$ scales as $\mathcal{O}(\vert \vert g \vert \vert_{1}^{2m} / \epsilon^{2})$.

\section{Classical post-processing}
\subsection{Estimating local observables (\texorpdfstring{\cref{alg:reconstruct_algo}}))}
\label{app:reconstruct_algo}
For an arbitrary observable $P$ we can calculate the estimator from \cref{eq:mitigated_shadow_estimator} as
\begin{equation}
	\tr[ P \hat{\rho}_{\underline{k}, l} ] = \lVert g \rVert_1 \, \mathrm{sign}(g_{\underline{k}}) \tr[ P C_{E}^{-1} (E_{l}) ].
\end{equation}
In the idealised measurement case $E$ simplifies as detailed
in \cref{eq:idealised_measurements} and our classical shadow is
then a collection of $\nshot$ measurement outcomes $b \in \{ 0,1\}^N$ and
corresponding single-qubit Pauli measurement basis
defined by the single-qubit rotations $Q_k \in \mathcal{Q}$
Since $P$ is a $q$-local Pauli string it admits the product from
$P = \bigotimes_{i \in Q} P^{(i)} $ while the snapshot $C_{E}^{-1} (E_{l})$
similarly is of a product form via \cref{eq:snapshot}.
Note also that we use the index set $Q$ to abbreviate the set of qubits to which $P$
acts non-trivially and $|Q|=q$.
Thus we obtain the trace as
\begin{equation} \label{eq:prod}
	\tr[ P \hat{\rho}_{\underline{k}, l} ] 
	=
   \prod_{i \in Q} \tr\Big[ P^{(i)} \Big(3 (Q^{(i)}_{l})^{\dagger} \vert b^{(i)}\rangle \langle b^{(i)} \vert  Q^{(i)}_{l} - \openone \Big) \Big].
\end{equation}
Above we have used that the trace of a tensor product simplifies to a product of traces
and that on every qubit $i$ for which $P^{(i)} \equiv \openone$ the single-qubit expression evaluates to
\begin{equation*}
	\tr\Big[ P^{(i)} \Big(3 (Q^{(i)}_{l})^{\dagger} \vert b^{(i)}\rangle \langle b^{(i)} \vert  Q^{(i)}_{l} - \openone \Big) \Big] = 1
\end{equation*}
The expression in \cref{eq:prod} evaluates to $\{ \pm 3^q, 0 \}$ as we explain now.
The expression evaluates to $\pm 3^q$ if the measurement bases defined by $Q^{(i)}_k$
are the same as the single qubit Pauli matrices $P^{(i)}$ on the qubits $i \in Q$.
The sign is then determined by the bits $b^{(i)}$ in the bitstring $b$, i.e., it is negative
if the Hamming weight of the bitstring is odd on the qubits in $Q$.
Otherwise, if the measurement bases are not compatible with $P$ on the qubits in $Q$ then the above expression evaluates to zero.

Thus we obtain the simplified estimator as
\begin{equation*}
	\tr[ P \hat{\rho}_{\underline{k}, l}  ] = \lVert g \rVert_1 \, 3^q\, \mathrm{sign}(g_{\underline{k}}) f(b, Q_k),
\end{equation*}
where $f(b, Q_k) \in \{\pm 1, 0\} $.
The reconstruction algorithm thus takes the classical shadow data as the collection of bitsrings and 
Pauli measurement bases $\{b_k, P_k\}_{k=1}^{\nshot}$, as well as the Pauli observable $P$, and
calculates the values of $f(b,Q_k)$ as $0$ if the measurement bases are incompatible with $P$
and $\pm 1$ otherwise.
The algorithm has runtime $\mathcal{O}( q \nshot )$.

\subsection{Improved estimation of local observables with light cones (\texorpdfstring{\cref{alg:light_cone}}))}
\label{app:light_cone}

	For example, imagine a circuit of a single noisy gate with quasiprobaiblity decomposition $|\gamma_1| \mathcal{G}_1 - |\gamma_2| \mathcal{G}_2$
	and an observable $O$ whose light cone does not contain this gate.
	In \cref{alg:reconstruct_algo}
	we modify the sign $\sign(\gamma_k) \rightarrow +1$  and renormalise the coefficients such that
	$\lVert \gamma \rVert_1 \rightarrow 1$. Thus in expectation we implement the gate
	\begin{equation*}
		|\gamma_1| \tr[O \mathcal{G}_1(\rho_{in})] + |\gamma_2| \tr[O \mathcal{G}_2(\rho_{in})] = \tr[O \mathcal{U}(\rho_{in})],
	\end{equation*}
	where we used that $|\gamma_1| + |\gamma_2| = 1$ by definition.
	Furthermore, as the
	gate is not in the light cone of the observable $O$
	we used the identity $\tr[O \mathcal{G}_1(\rho_{in})] = \tr[O \mathcal{G}_2(\rho_{in})] = \tr[O \mathcal{U}(\rho_{in})]$
	where $\mathcal{U}$ is the ideal gate.
	
	In general, for general circuits with local noise models we can redefine the signs and norms in \cref{eq:product}   
	such that if the corresponding gate is not contained in the light cone as $l \notin \mathcal{I}$ then
	$\sign(\tilde{\gamma}^{(l)}_{k_l}) = +1$ with $\lVert \tilde{\gamma}^{(l)} \rVert_1 = 1$ 
	otherwise the coefficients are unchanged as
	$\tilde{\gamma}^{(l)}_{k_l}=  \gamma^{(l)}_{k_l}$.

	This allows us to rescale the original sampling cost $\lVert g\rVert_1 = \prod_{l=1}^\nu \lVert \gamma^{(l)} \rVert_1$
	where $\nu$ is the total number of noisy gates to $\prod_{l \in \mathcal{I}} \lVert \gamma^{(l)} \rVert_1$
	since we have redefined $ \lVert \gamma^{(l)} \rVert_1 \rightarrow 1$ for every gate whose index $l$ is outside of the light cone.		
	A significant advantage of this approach is that it is completely done at the stage of classical post-processing and
	at the time of measurement we simply just implement the conventional quasiprobability approach assuming all gates are noisy.
	For generalisation to non-local noise models refer to \cite{locality_error_mitigation_ibm_2023}.

\subsection{Local purities (\texorpdfstring{\cref{alg:entropy}}))}
For ease of notation we abbreviate the indexes of PEC snapshots as $\hat{\rho}_{i} := \hat{\rho}_{\underline{k}, l}$ with $i = (\underline{k}, l)$.
Given a subsystem as a set of qubits $Q =\left\{q_1, ..., q_m\right\}$
 	 we obtain an unbiased estimator for the Rényi entropy as
\begin{align}
	\hat{R}_Q &:= \tr\left[\text{SWAP}_{Q,Q'} \hat{\rho}_i  \otimes  \hat{\rho}_j \right] \nonumber \\
	& = \lVert g \rVert_1^2 \sign(g_{i})\sign(g_{j}) f( i, j, Q),
\end{align}
with $i \neq j$. Here $\text{SWAP}_{Q,Q'}$ swaps all pairs of qubits $q_k$ and $q_{k+N}$
in the system of $2N$ qubits in $\hat{\rho}_i  \otimes  \hat{\rho}_j $.

The factor $f( i, j, Q)$ can be computed analytically using that snapshots are of a product form as 
$\hat{\rho}_{i} = \bigotimes_{q=1}^N \hat{\rho}_{i}^{(q)}$ as
\begin{equation*}
f( i, j, Q)	=\prod_{q \in Q} \tr\left[ \text{SWAP}\hat{\rho}_{i}^{(q)}\otimes\hat{\rho}_{j}^{(q)}\right],
\end{equation*}
where $\text{SWAP}$ is the standard 2-qubit SWAP operator.
Here we have used that traces for qubits not in subsystem $Q$ evaluate to $1$ and that the trace
of a tensor product simplifies to a product of traces.

We can evaluate analytically the following  expression as it only involves 2 qubits as
$$\tr\left[ \text{SWAP}\hat{\rho}_{i}^{(q_k)}\otimes\hat{\rho}_{j}^{(q_k)}\right].$$
When the single-qubit measurement bases $Q_k^{(q)}$ and $Q_l^{(q)}$ in the snapshots are non-identical then the expression evaluates
to $\tfrac{1}{2}$.
When the measurement bases are identical then 
then the expression evaluates to $5$ given the measurement outcome bits are identical.
Otherwise it is $-4$ for non-identical measurement outcome bits. 
$f( i, j, Q)$ is then just a product of these values evaluated for all qubits in $Q$.

The algorithm simply iterates over all distinct pairs of snapshots and evaluates $f( i, j, Q)$.
We further multiply each snapshot outcome by the corresponding
signs $ \sign(g_{i})\sign(g_{j})$ and with the squared norm $\lVert g \rVert_1^2$. 
Finally, we compute the median of means of these individual outcomes.
The algorithm has a runtime $\mathcal{O}( |Q| \nshot^2 )$.

\section{Details of numerical simulations\label{app:numerics}}
In this work we use exact quantum state simulators to simulate noisy quantum circuits up to 12-qubit systems.
Given the present approach is effectively a Monte-Carlo sampling scheme, we efficiently simulate the effect of 
noise using a Monte-Carlo approach. For example, a Pauli channel is simulated by randomly
choosing a Pauli event according to its corresponding probability and applying the corresponding Pauli operator to the state.

As state-of-the-art experiments~\cite{kim2023evidence} apply Pauli twirling to guarantee that the noise model is 
well approximated by a Pauli channel---which can be learned efficiently~\cite{berg2022probabilistic}---we assume Pauli
noise models. In particular, we assume that two-qubit gates are the dominant source of noise and 
they are affected by Pauli errors with possibly
different probabilities for each gate as $p_k$.

\subsection{Details of \cref{fig:error_gs_estim}}

As we discussed in the main text, we simulate the 12-qubit ansatz circuit in \cref{fig:error_gs_estim} (left)
that prepares the ground state of the spin-ring Hamiltonian in \cref{eq:spin_ring_hamiltonian}.

\noindent \textbf{Gate error mitigation:}
We assume that each noisy quantum gate is affected by the following, biassed, local Pauli channel
\begin{align}
	\Phi_{k}(\rho) :=&  (1-p_k) \rho\\
	 +& p_k \left( \frac{1-\eta_k}{2} X \rho X + \frac{1-\eta_k}{2} Y \rho Y + \eta_k Z \rho Z \right),
	 \nonumber
\end{align} 
where $\eta_k$ is a bias parameter.
Furthermore, we consider that the error probability $p_k$ is specific to each gate and we randomly draw their values
from the distribution $\mathcal{N}(\mu = p, \sigma = p)$ and as we explain below
we explore two different regimes via $p \in \{10^{-3}, 2\times10^{-3} \}$.
We randomly choose probabilities to reflect the high
variability of two-qubit error rates in typical superconducting devices~\cite{kim2023evidence}. 
Of course, this variability may be lower in ion-trap devices~\cite{PRXQuantum.2.020343,moses2023race}.
Given in most platforms T2 relaxation timescales are significantly faster than T1 relaxation
timescales, we consider a biased noise channel, that is specific to each gate, by choosing the
bias parameters $\eta_k$ from the distribution $\mathcal{N}(\mu = 0.9, \sigma = 0.015)$.

We can straightforwardly find the inverse noise channel analytically as
\begin{equation*}
	\Phi_k(\rho)^{-1} :=  \gamma_0 \rho +  \gamma_1 X \rho X + \gamma_2 Y \rho Y + \gamma_3 Z \rho Z,
\end{equation*}
via coefficients 
\begin{align}
	\gamma_0 &= \lVert \gamma \rVert_1\frac{2 - p_k (p_k \eta_k^2 - p_k - 2 \eta_k + 4)}{2 + 2 p_k (\eta_k-1) (1 + p_k + p_k \eta_k)}, \nonumber \\
    \gamma_1 = \gamma_2 &= \lVert \gamma \rVert_1\frac{p_k (\eta_k-1) (p_k \eta_k + p_k -1)}{2 + 2 p_k (\eta_k-1) (1 + p_k + p_k \eta_k)}, \nonumber \\
    \gamma_3 &= \lVert \gamma \rVert_1\frac{p_k (2 \eta_k + p_k (\eta_k^2-1))}{2 + 2 p_k (\eta_k-1) (1 + p_k + p_k \eta_k)},
\end{align}
with $\lVert \gamma \rVert_1 = 1/2 \left(1/(1 + 2 p_k (\eta_k-1)) - 2/(p_k + p_k \eta_k-1)\right)$. 
In our simulations we thus randomly apply Pauli $X$, $Y$, $Z$ or $\openone$ gates
to the qubits in the support of the relevant gate via probabilities defined by $\underline{\gamma}$.

We additionally note that, while we found the coefficients $\underline{\gamma}$ analytically, they can also
conveniently be computed numerically at the pre-processing stage. The numerical inversion is efficient
given noise models are local, while
state-of-the art noise-model learning techniques achieve trivial inversion via
Pauli-Lindblad channels even for non-local (but sparse) models~\cite{berg2022probabilistic}.
Indeed, the present noise channel captures dominant terms in these experimentally-learned
error models~\cite{berg2022probabilistic}.

\noindent \textbf{Circuit error rate:}
In \cref{fig:error_gs_estim} we repeat our simulations for two different noise levels to demonstrate that the bias
(as well as the sampling overhead) grows exponentially.
We implement two different circuit error rates, $\xi \approx 0.15$ and $\xi \approx 0.26$.
As we detail in sec~\cref{sec:pec},
this circuit error rate $\xi = \sum_k p_k$ expresses the average number of errors in the full circuit 
and our circuit contains $\nu = 60$ noisy two-qubit gates.

\noindent \textbf{Readout error mitigation:}
We also consider readout errors: while readout errors may not be significant in ion-trap
devices~\cite{PRXQuantum.2.020343,moses2023race} as they are typically below gate operation
errors, we consider readout error rates that are consistent with typical superconducting systems.
In particular, we assume the noise model detailed in~\cref{sec:readout}
whereby the readout  of each qubit is affected by the same bitflip probability $\alpha = 0.01$.
We mitigate the effect of readout errors using the analytical
inverse of the measurement channel detailed in \cref{sec:readout}.

\noindent \textbf{Resampling scheme:}
	In \cref{fig:error_gs_estim} (middle) we perform a ground-state energy estimation for an increasing number $N_s$ of snapshots
	and for each fixed shot budget $N_s$, we estimate the average error from the exact energy by averaging over $10^4$ different experiments.
	We efficiently simulate this sampling task via the usual resampling scheme:
	we generate a very large pool of $10^7$ snapshots (an order of magnitude more than the largest
	simulated shot budget $N_s \leq 10^6$) using the above described
	quantum-mechanical simulations. We then estimate the ground-state energy by randomly
	choosing $N_s$ snapshots each time from this large pool.

\noindent \textbf{Error bounds:}
In \cref{fig:error_gs_estim} (right) we compare to analytical error bounds in \cref{theo:m_properties}.
Recall that our bounds in \cref{theo:m_properties} depend on the largest shadow norm $\max_{1 \leq k \leq M}{\lVert O_{k} \rVert_{E}^{2}}$
and in \cref{lemma:shadow_norm} we evaluate the shadow
norm of a $q$-local Pauli string as $3^q$ assuming ideal measurements via the snapshots of the from 
in \cref{eq:snapshot}.
We now take into account the effect of readout errors by explicitly calculating the shadow norm
under the above readout-error model.
As such, it is straightforward to modify our proof in \cref{lemma:shadow_norm}
by considering the effects and snapshots from \cref{sec:readout}, obtaining the following
bound on the shadow norms for considering all $q$ local
Pauli strings as
\begin{equation*}
	\max_{1 \leq k \leq M}{\lVert O_{k} \rVert_{E}^{2}} = 
	\frac{3^q}{(1{-}2\alpha)^{2q}}.
\end{equation*}
Indeed, readout-error mitigation has an associated measurement overhead of $(1{-}2\alpha)^{-2q}$.

\subsection{Details of \cref{fig:extrapol_shadows}}
As our main objective is to demonstrate that extrapolation-based error mitigation
is indeed compatible with classical shadows, we assume a simple error model whereby
every two-qubit gate has the same error probability $p$ that we can perfectly magnify
to $\lambda p$. Of course, in reality magnifying the error rates precisely is 
rather involved but has been successfully demonstrated in even large-scale experiments~\cite{kim2023evidence}.
In particular, in our simulations for extrapolation-based error mitigation, we assume
the above Pauli noise with an asymmetry parameter of $\eta_k = \tfrac{1}{3}$,
effectively a local depolarising noise with a probability $p$.
We define this channel as
\begin{equation}\label{eq:noise_channel}
	\Phi_p(\rho) :=  (1-p) \rho + p/3 [ X \rho X + Y \rho Y + Z \rho Z],
\end{equation} 
where $X$, $Y$ and $Z$ are Pauli matrices,
which channel can be analytically inverted to obtain the inverse channel as
\begin{equation*}
	\Phi_p(\rho)^{-1} :=  \gamma_0 \rho +  \gamma_1 X \rho X + \gamma_2 Y \rho Y + \gamma_3 Z \rho Z.
\end{equation*}
The explicit form of the coefficients follows as
\begin{equation*}
	\underline{ \gamma } =  \lVert \gamma \rVert_1 (
	\frac{3-p}{2 p +3},
	\frac{-p}{2 p+3},
	\frac{-p}{2 p+3},
	\frac{-p}{2 p+3}   ),
\end{equation*}
and the norm is $\lVert \gamma \rVert_1 = (3 + 2 p)/(3 - 4 p)$.
The same noise model was also used for \cref{fig:entropy}.


%

\end{document}